\setlist*[enumerate]{label={(\alph*)}, ref={\alph*)}}
\newcommand{\cspace}{\mathcal{C}}
\newcommand{\hypspace}{\mathcal{H}}
\newcommand{\sspace}{\mathcal{S}}
\newcommand{\sord}{\sqsubseteq_\mathrm{s}}
\newcommand{\sjoin}{\sqcup}
\newcommand{\bigsjoin}{\bigsqcup}
\newcommand{\sbot}{\bot_\mathrm{s}}
\newcommand{\concrete}{\gamma}
\newcommand{\consistent}{\kappa}
\newcommand{\learner}{\lambda}
\newcommand{\ordinals}{\mathbb{O}}
\newcommand{\teacher}{\tau}
\newcommand{\target}{\mathcal{T}}
\newcommand{\seq}[1]{\langle #1 \rangle}
\newcommand{\complexord}{\preceq}
\newcommand{\wqo}{\preceq}
\newcommand{\W}{\mathcal{W}}
\newcommand{\A}{\mathcal{A}}
\newcommand{\nat}{\mathbb{N}}
\newcommand{\exampleend}{{\tikz \draw (0, 0) -| +(1ex, 1ex);}}
\title{Abstract Learning Frameworks for Synthesis}
\author{Christof Löding\inst{1} \and P. Madhusudan\inst{2} \and Daniel Neider\inst{2}$^{,}$\inst{3}}
\institute{RWTH Aachen
\and University of Illinois, Urbana-Champaign
\and University of California, Los Angeles}
\begin{document}

%---------- Title ----------
\maketitle

%---------- Abstract ----------
\begin{abstract}
We develop abstract learning frameworks  for synthesis that embody the principles
of the CEGIS (counterexample-guided inductive synthesis) algorithms in current literature.
Our framework is based on iterative learning
from a hypothesis space that captures synthesized objects, using counterexamples from an
abstract sample space, and a concept space
that abstractly defines the semantics of synthesis. We show that a variety of
synthesis algorithms in current literature can be embedded in this general framework.
%While studying these embeddings, we also generalize some of the synthesis problems these
%instances are of, resulting in new ways of looking at synthesis problems using learning.
We also exhibit three general recipes for convergent synthesis: the first two recipes based on
finite spaces and Occam learners generalize all techniques of convergence used in existing engines,
while the third, involving well-founded quasi-orderings, is  new,
and we instantiate it to concrete synthesis problems. 
\end{abstract}

%---------- Introduction ----------
\section{Introduction}
% !TeX root = main.tex

The field of synthesis, which includes several forms of synthesis including
synthesizing controllers~\cite{PnueliR89}, program expressions~\cite{sketch}, program repairs~\cite{DBLP:conf/cav/KneussKK15}, program translations~\cite{DBLP:journals/debu/CheungMSAM14,DBLP:conf/oopsla/KaraivanovRV14},
loop invariants~\cite{ICEML,DBLP:conf/cav/0001LMN14}, and even entire programs~\cite{MannaW80,DBLP:conf/aaip/Kitzelmann09}, has become a fundamental 
and vibrant subfield in programming languages. While classical studies of synthesis
have focused on synthesizing entire programs or controllers from specifications~\cite{MannaW80,PnueliR89},
there is a surge of tractable methods that have emerged in recent years in 
synthesizing small program expressions. These expressions often are complex but small, and
are applicable in niche domains such as program sketching~\cite{sketch} (finding program expressions that
complete code), synthesizing Excel programs for string transformations~\cite{DBLP:conf/popl/Gulwani11}, synthesizing superoptimized code~\cite{DBLP:conf/asplos/Schkufza0A13}, deobfuscating code~\cite{DBLP:conf/icse/JhaGST10},  
synthesizing invariants to help in verification~\cite{DBLP:conf/cav/0001LMN14,ICEML}, etc.

One prominent technique that has emerged in recent years for expression synthesis is
based on \emph{inductively learning expressions from samples}. Assume the synthesis problem is 
to synthesize an expression $e$ that satisfies some specification $\psi(e)$. 
The crux of this approach is to \emph{ignore} the precise specification $\psi$, 
and instead synthesize an expression based on certain \emph{facets}
of the specification. 
These incomplete facets of the specification are often 
much simpler in structure and in logical complexity compared to the specification,
and hence synthesizing an expression satisfying the constraints the facets impose is 
more tractable. The learning-based approach to synthesis hence happens in rounds--- 
in each round, the learner synthesizes an expression that satisfies the current facets,
and a \emph{verification oracle} checks whether the expression satisfies the actual specification $\psi$,
and if not, finds a new facet of the specification witnessing this. The learner then continues
to synthesize by adding this new facet to its collection. 

This counter-example 
guided inductive synthesis (CEGIS) approach~\cite{armando_thesis} to synthesis in current literature philosophically 
advocates precisely this kind of inductive synthesis.
The CEGIS approach has emerged as a powerful technique in several domains
of both program synthesis as well as program verification ranging from synthesizing program invariants
for verification~\cite{DBLP:conf/cav/0001LMN14,ICEML} to specification mining~\cite{AlurCMN05}, program expressions that complete sketches~\cite{sketch}, 
superoptimization~\cite{DBLP:conf/asplos/Schkufza0A13}, control~\cite{DBLP:conf/hybrid/JinDDS13}, string transformers for spreadsheets~\cite{DBLP:conf/popl/Gulwani11}, protocols~\cite{DBLP:conf/pldi/UdupaRDMMA13}, 
etc.
%Specifications for synthesis tend to be extremely varied and complex--- ranging from complex programs or complex
%universally quantified formulae to simple but underspecified requirements like a few examples. 
%Consequently, the learning-based approach which divorces itself from looking at the specification but looks
%only on the hypotheses space of objects to synthesize and simple conditions imposed by facets of these
%specifications has turned out to be very effective. Moreover, when the specification is underspecified,
%the synthesis engine effectively \emph{generalizes} the specification, using the natural learning biases
%in the learning algorithms.

The goal of this paper is to develop a \emph{theory of iterative learning-based synthesis} through a formalism
we call \emph{abstract learning frameworks for synthesis}. The framework we develop aims to be general and abstract,
encompassing several known CEGIS frameworks as well as several other synthesis algorithms not generally
viewed as CEGIS. The goal of this line of work is to build a framework, with accompanying concepts, 
definitions, and vocabulary that can be used to understand and combine learning-based synthesis across different
domains.

An abstract learning framework (ALF) (see Figure~\ref{fig:alf} on Page~\pageref{fig:alf}) consists of three spaces: $\hypspace$, $\sspace$, and
$\cspace$. The \emph{(semantic) concept space} $\cspace$ gives semantic descriptions of the concepts that we wish to synthesize, 
the \emph{hypotheses space} $\hypspace$ comprises restricted (typically syntactically restricted) forms of the concepts
to synthesize, and the sample space $\sspace$ consists of samples (modeling facets of the specification) from which the learner synthesizes
hypotheses.  The spaces $\hypspace$ and $\sspace$ are related by a variety of functions that give semantics to samples and semantics to 
hypotheses using the space $\cspace$. The conditions imposed on these relations capture the learning problem
precisely, and their abstract formulation facilitates modeling a variety of synthesis frameworks in the literature.

The target for synthesis is specified as a \emph{set} of semantic concepts.
This is an important digression from classical learning frameworks, where often one can  assume that there is a
\emph{particular} target concept that the learner is trying to learn. Note that in synthesis problems, we must
\emph{implement the teacher as well}, and hence the modeling of the target space is important.
In synthesis problems, the teacher does not have a single target in mind nor does she know explicitly the target set 
(if she knew, there would be no reason to synthesize!). 
Rather, she knows the \emph{properties} that capture the set of target concepts. 
For instance, in invariant synthesis, the teacher
knows the properties of a set being an invariant for a loop, and this defines implicitly a \emph{set} of invariants
as target.
%in program deobfuscation, she knows that the synthesized program should be equivalent to an existing program. 
The teacher needs to examine a hypothesis and
check whether it satisfies the properties defining the target set. Consequently, we can view the teacher as a
\emph{verification oracle} that checks whether a hypothesis belongs to the implicitly defined target set. 

%The learner and teacher (verification oracle) are then couched in this abstract learning framework: the learner
%maps samples to hypotheses while the teacher proposes new samples that exhibit why the proposed hypothesis is not 
%in the target set. The precise notions of what it means for a hypothesis to be consistent with a sample, and
%what it means for a teacher to return a sample that distinguishes the hypothesis from the target are modeled
%abstractly in our framework. The learner and teacher work in tandem forever, or until the learner synthesizes a concept 
%in the target set.

We exhibit a variety of existing synthesis frameworks that can be naturally seen as instantiations of 
our abstract-learning framework, where the formulation shows the diversity in the instantiations of 
the spaces. These include (a) a variety of CEGIS-based synthesis techniques for synthesizing program expressions in
sketches (completing program sketches~\cite{sketch}, synthesizing loop-free programs~\cite{DBLP:conf/pldi/GulwaniJTV11}, mining specifications~\cite{DBLP:conf/hybrid/JinDDS13}, synthesizing synchronization code for concurrent programs~\cite{DBLP:conf/cav/CernyCHRRST15}, 
etc.),  (b) synthesis from input-output examples such as Flashfill~\cite{DBLP:conf/popl/Gulwani11},
(c) the CEGIS framework applied to the concrete problem of solving synthesis problems expressed in the SMT-based SyGuS format~\cite{DBLP:conf/fmcad/AlurBJMRSSSTU13,DBLP:series/natosec/AlurBDF0JKMMRSSSSTU15}, 
and three synthesis engines that use learning to synthesize solutions, 
(d) invariant synthesis frameworks, including Houdini~\cite{houdini} and the more recent ICE-learning
model for synthesizing loop invariants~\cite{DBLP:conf/cav/0001LMN14}, spanning a variety of domains from arithmetic~\cite{DBLP:conf/cav/0001LMN14,ICEML} to quantified invariants over data structures~\cite{CAVQDA},
and (e) synthesizing fixed-points and abstract transformers in abstract interpretation
settings~\cite{thakur}.
%,  and (f) synthesis
%of complex invariants expressed as automata in regular model-checking as well as in synthesizing array/list invariants.
%We emphasize that the fact that these techniques fit our framewok is itself not surprising, as they all follow the general
%CEGIS style for synthesis; however, the formulation of these in an abstract framework elucidate the techniques underlying
%these tools better, especially in clarifying the sample spaces, which are used by the teacher to give counterexamples
%and by the learning algorithms to learn concepts from.

Formalizing of synthesis algorithms as ALFs
can help highlight the nuances of different learning-based synthesis algorithms, even for the \emph{same} problem.
One example comprises two inductive learning approaches for synthesizing program invariants--- one 
based on the ICE learning model~\cite{DBLP:conf/cav/0001LMN14},
and the second which is any synthesis engine for logically specified synthesis problems in the SyGuS format,
which can express invariant synthesis.  Though both can be seen as CEGIS-based synthesis algorithms, the sample space
for them are very different, and hence the synthesis algorithms are also different--- the significant performance
differences between SyGuS-based solvers and ICE-based solvers(the latter performing better) in the recent SyGuS competition 
(invariant-synthesis track) %~\cite{}
suggest that this choice may be crucial~\cite{sygus15competition}.
Another example are two classes of CEGIS-based solvers for synthesizing linear integer arithmetic functions against
SyGuS specifications--- one based on a sample space that involves purely inputs to the function being synthesized~\cite{DBLP:conf/cav/Saha0M15,alchemist2},
while the other is the more standard CEGIS algorithm based on valuations of quantified variables.

We believe that just describing an approach as a  learning-based synthesis algorithm or a CEGIS algorithm does not convey
the nuances of the approach--- it is important to precisely spell out the sample space and the semantics of this space with
respect to the space of hypotheses being learned. The ALF framework gives the vocabulary in phrasing these nuances, allowing
us to compare and contrast different approaches. 

%Furthermore, the learning algorithms are accurately described by the
%abstract framework, and hence allows us to \emph{swap learners} while keeping the rest of the context the same. For example,
%when hypotheses spaces consist of conjunctions of predicates, and samples consist of valuations of predicates, the learning
%problem is a \emph{conjunctive Boolean formula} learning problem, and there are several efficient machine learning 
%algorithms for this problem~\cite{KearnsV94,mitchell,DBLP:journals/ml/Littlestone87,DBLP:journals/ml/CortesV95}.
%Applications that use such a sample domain (like Houdini~\cite{houdini} in program verification and Daikon~\cite{DBLP:conf/icse/ErnstCGN00} for software property mining) can utilize these learners interchangeably. Similarly, for problems in constructing deobfuscation of functions from integers to 
%integers~\cite{DBLP:conf/icse/JhaGST10}, we can use guarded linear arithmetic learning algorithms like~\cite{alchemist}, when the sample spaces and hypothesis spaces match (see Section~\ref{sec:examples}).

%The framework can also be of help when designing new learning based synthesis methods. Clearly formulating the concept space, 
%hypothesis space, and the target specification helps in choosing an appropriate sample space for which teachers and learners can be constructed. \todo{I added this sentence. Please check if ok.}

\noindent \textbf{Convergence:~~}
The second main contribution of this paper is to study \emph{convergence} issues in the general abstract learning-based framework
for synthesis. We first show that under the reasonable assumptions that the learner is consistent (always proposes a hypothesis consistent with
the samples it has received) and the teacher is honest (gives a sample that distinguishes the current hypothesis from the target set
without ruling out any of the target concepts), the iterative learning will always converge \emph{in the limit} 
(though, not necessarily in finite time, of course). This theorem vouches for the correctness of our abstract
formalism in capturing abstract learning, and utilizes all the properties that define ALFs.

We then turn to studying strategies for convergence in finite time.
We propose three general techniques for ensuring successful termination for the learner.
First, when the hypothesis space is bounded, it is easy to show that any consistent learner 
(paired with an honest teacher) will converge in finite time. Several examples of these exist in learning--- 
learning conjunctions as in the Houdini algorithm~\cite{houdini}, etc., learning Boolean functions (like decision-tree learning with purely Boolean predicates as
attributes) or functions over bit-vector domains (Sketch~\cite{sketch} and the SyGuS solvers that work on bit-vectors), 
and learning invariants using
specialized forms of a finite class of automata that capture list/array invariants~\cite{CAVQDA}.

The second recipe is a formulation of the Occam's razor principle that uses parsimony/simplicity as the learning bias~\cite{sep_simplicity}.
The idea of using Occam's principle in learning is prevalent~(see Chapter 2 of~\cite{KearnsV94} and \cite{mitchell}) though its universal appeal in generalizing concepts is debatable~\cite{DBLP:journals/datamine/Domingos99}.
We show, however, that learning using Occam's principle helps in convergence.
A learner is said to be an Occam learner if there is a \emph{complexity ordering}, which needs to be a total quasi order where the set of 
elements below any element is finite, such that the learner always learns \emph{a smallest} concept according to this order that
is consistent with the sample. We can then show that any Occam learner will converge to some target concept, if one exists, in finite time.
This result generalizes many convergent learning mechanisms that we know of in the literature (for example, the convergent ICE-learning
algorithms for synthesizing invariants using constraint solvers~\cite{DBLP:conf/cav/0001LMN14}, and the enumerative solvers in almost every domain
of synthesis~\cite{DBLP:conf/icalp/Kuncak14,DBLP:conf/cav/KneussKK15,DBLP:conf/pldi/OseraZ15,DBLP:conf/pldi/UdupaRDMMA13}, including for SyGuS~\cite{DBLP:conf/fmcad/AlurBJMRSSSTU13,DBLP:series/natosec/AlurBDF0JKMMRSSSSTU15}, that enumerate by dovetailing through expressions).

The first two recipes for finite convergence cover all the methods we know in the literature for convergent learning-based synthesis, 
to the best of our knowledge.
The third recipe for finite convergence is a more complex one based on well-founded quasi orderings. 
This recipe is involved and calls for using clever initial queries that force the teacher to divulge information that then
makes the learning space tractable.
We do not know of any existing synthesis learning frameworks that use this natural recipe, but
propose two new convergent learning algorithms following this recipe, one for intervals, and the other for conjunctive linear inequality
constraints over a set of numerical attributes over integers. 

%The paper is structured as follows. We present the theory of abstract learning frameworks for synthesis in Section~2 and the various general recipes for convergence in Section~3 (including the two new learning
%algorithms following the third recipe), explaining the theory using simple synthesis domains. 
%In Section~4 we exhibit how a variety of learning-based synthesis algorithms in the literature can be formulated using our framework, and show the
%variety of algorithms our framework can capture and contrast some of the nuances of the algorithms that come about with the formulations. While the goal of this section is mainly to formulate existing synthesis
%engines in the abstract framework, we also present some new applications of our framework, in particular
%a generalization of ICE-learning to learning fixed-points in abstract interpretation. 
%Section~5 discusses some variations and limitations of our framework, and Section~6 concludes.

%@InCollection{sep-simplicity,
%	author       =	{Baker, Alan},
%	title        =	{Simplicity},
%	booktitle    =	{The Stanford Encyclopedia of Philosophy},
%	editor       =	{Edward N. Zalta},
%	howpublished =	{\url{http://plato.stanford.edu/archives/fall2013/entries/simplicity/}},
%	year         =	{2013},
%	edition      =	{Fall 2013},
%}

%---------- ALF ----------
\section{Abstract Learning Frameworks for Synthesis} \label{sec:alf}
% !TeX root = main.tex

In this section we introduce our abstract learning framework for
synthesis. Figure~\ref{fig:alf} gives an overview of the components
and their relations that are introduced in the following (ignore the
target $\target$, $\gamma^{-1}(\target)$, and the maps $\tau$ and $\lambda$ for now).
We explain these components in more detail after the formal definition.

\begin{wrapfigure}[11]{R}{6.75cm}
	\centering
	\vskip -2\baselineskip
	\begin{tikzpicture}[font=\scriptsize, auto, >=stealth]
		% Sample space
		\node[draw, thick, fill=black!7, shape=circle, minimum size=1.5cm] (sample) at (0, 0) {};
		\node[anchor=south, text width=2.4cm, align=center] at (sample.north) {sample space $(\sspace, \sord, \sjoin, \sbot)$};

		% Hypothesis space
		\node[draw, thick, fill=black!7, shape=circle, minimum size=1.5cm] (hypothesis) at (4, 0) {};
		\node[anchor=south, text width=2.5cm, align=center] at (hypothesis.north) {hypothesis space $\mathcal H$};
		\node[draw, fill=black!50, text=white, rounded corners=1pt, align=center] at (hypothesis) {$\concrete^{-1}(\mathcal T)$};

		% Concept space
		\node[draw, thick, fill=black!7, shape=circle, minimum size=1.5cm] (concept) at (2, -2) {};
		\node[anchor=west] at (concept.east) {concept space $\mathcal C$};
		\node[draw, shape=ellipse, minimum width=1cm, fill=black!50, text=white] at (concept) {$\mathcal T$};

		% Arrows
		\draw[->] (sample) edge[bend right=15] node[text width=1.25cm, swap, align=center] {learner $\lambda \colon \sspace \to \mathcal H$} (hypothesis);
		\draw[->] (hypothesis) edge[bend right=15] node[text width=1.25cm, swap, align=center] {teacher $\tau \colon \mathcal H \to \sspace$} (sample);
		\draw[->] (hypothesis) edge node[text width=1.75cm] {concretization $\concrete \colon \mathcal H \to \mathcal C$} (concept);
		\draw[->] (sample) edge node[text width=1.75cm, swap]
                {consistency $\consistent \colon \sspace \to 2^\mathcal C$} (concept);

	\end{tikzpicture}
	
	\vskip -.75\baselineskip
	\caption{Components of an ALF} \label{fig:alf}
\end{wrapfigure}
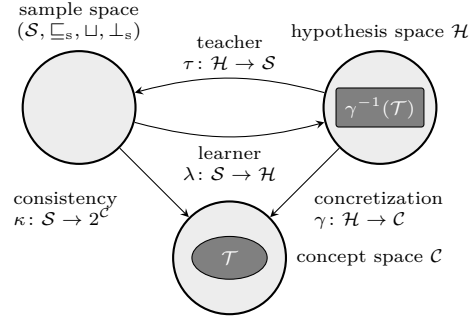

%% An abstract
%% learning framework (ALF) consists of a concept space $\cspace$, a hypothesis
%% space $\hypspace$ and a sample space $\sspace$. The hypothesis space
%% represents the set of concepts that the learner learns using samples
%% from the sample space $\sspace$. The \emph{semantics} relating the samples
%% to hypothesis is given using the concept space $\cspace$, and appropriate
%% maps between them ($\gamma$ and $\kappa$). 
%The sample space is further endowed
%with a partial-order and a least upper-bound operator to facilitate the learner
%from taking unions of samples obtained during different rounds during iterative learning.

%% An \emph{abstract learning framework for synthesis} (ALF, for short), is a tuple of the form $\A = (\cspace, \hypspace, \concrete, \sspace,
%% \consistent)$, where $\cspace$ is a concept space, $\hypspace$ is a
%% hypothesis space, $\concrete: \hypspace \rightarrow \cspace$ is a
%% concretization function, $\sspace$ is a sample space, and
%% $\consistent: \sspace \rightarrow 2^\cspace$ is  a consistency
%% function. 

\begin{definition}[Abstract Learning Frameworks] \label{def:ALF}
An \emph{abstract learning framework for synthesis} (ALF, for short), 
is a tuple $\A = (\cspace, \hypspace, (\sspace,\sord,\sjoin,\sbot) , \concrete, \consistent)$, with
\begin{itemize}
  \item A class $\cspace$, called the concept space,
  \item A class $\hypspace$, called the hypothesis space, 
  \item A class $\sspace$, called the sample space, with a join semi-lattice $(\sspace,\sord,\sjoin,\sbot)$
              defined over it,
  \item A concretization function $\concrete: \hypspace \rightarrow \cspace$, and
  \item A consistency function: $\consistent: \sspace \rightarrow
  2^\cspace$ satisfying $\consistent(\sbot) = \cspace$ and $\consistent(S_1 \sjoin S_2) = \consistent(S_1) \cap \consistent(S_2)$ for all $S_1, S_2 \in \sspace$.  
  If the second condition is relaxed to $\consistent(S_1 \sjoin S_2) \subseteq \consistent(S_1) \cap \consistent(S_2)$, we speak of a \emph{general ALF}.

\end{itemize}
%% and where the following conditions hold:
%% \begin{itemize}
%%  \item 
%%  \item $\consistent(S_1 \sjoin S_n) = \consistent(S_1) \cap
%%   \consistent(S_2)$ for all $S_1, S_s \in \sspace$. 

%% \end{itemize}
We say an ALF has a complete sample space if the sample space
$(\sspace,\sord,\sjoin,\sbot)$
is a complete join semi-lattice (i.e., if the join is defined for
arbitrary subsets of $\sspace$).
In this case, the consistency relation has to satisfy
$\consistent(\bigsjoin(\sspace')) = \bigcap_{S \in \sspace'}
\consistent(S)$ for each $\sspace' \subseteq \sspace$ (and $\consistent(\bigsjoin(\sspace')) \subseteq \bigcap_{S \in \sspace'}
\consistent(S)$ for general ALFs).
\end{definition}

As in computational learning theory, as presented
in e.g.,\ in \cite{Angluin92} or \cite{KearnsV94}, we consider a
\emph{concept space} $\cspace$, which contains the objects that we are
interested in. For example, in an invariant synthesis setting in verification, an element $C
\in \cspace$ would be a \emph{set} of program configurations. In the synthesis
setting, $\cspace$ could contain the objects we would like to synthesize, 
such as all functions from $\mathbb{Z}^n$ to $\mathbb{Z}$.

The \emph{hypothesis space} $\hypspace$ contains the objects that the
learner produces. These are representations of (some) elements from
the concept space. 
%% prime examples of hypotheses spaces are logical terms or 
%% formulas in a syntactically restricted logic. 
For example, if $\cspace$ consists of 
all functions from $\mathbb{Z}^n$ to $\mathbb{Z}$, then $\hypspace$ could 
consist the set of all functions expressible in linear arithmetic.
%sets of
%natural numbers (representing possible values of a numerical
%variable), then $\hypspace$ might be the interval domain consisting of
%intervals of the form $[a,b]$, where $a,b \in \mathbb{Z} \cup\{-\infty,
%\infty\}$. 

The relation between hypotheses and concepts is given by a
concretization function $\concrete: \hypspace \rightarrow \cspace$ that
maps hypotheses to concepts (their semantics). 
%% For the concept and hypothesis space in the example above, this function
%% maps each syntactic function  to the semantic function corresponding to it.

In classical computational learning theory for classification~\cite{KearnsV94,mitchell}, one often considers samples
consisting of positive and negative examples. 
%% These samples are
%% supposed to provide information on a target concept that is to be
%% learned. 
If learning is used to infer a target concept that is not
uniquely defined but rather should satisfy certain properties,  then
samples consisting of positive and negative examples are sometimes not
sufficient. As we will show later, samples can be quite complex (see Section~\ref{sec:examples} for such examples,
including implication counterexamples and grounded formulas). 

%% The space of samples used for learning is therefore not standard positive/negative examples
%%  in our framework, but an abstract space .  
We work with a \emph{sample space},
which is a bounded join-semilattice $(\sspace,\sord,\sjoin,\sbot)$
(i.e., $\sord$ is a partial order over $\sspace$ with $\sbot$ as the least element,
and $\sjoin$ is the binary least upper-bound operator on $\sspace$ with respect to this ordering).
An element $S \in \sspace$, when given by the teacher, intuitively, gives some information about a target specification. 
The join is used by the learner to combine the samples returned as feedback by the teacher 
during iterative learning. The least element $\sbot$ corresponds to the empty sample.
We encourage the reader to think of the join as the union of samples.

The \emph{consistency relation} $\consistent$ captures the semantics
of samples with respect to the concept space by assigning to each
sample $S$ the set $\consistent(S)$ of concepts that are consistent
with the sample. 
%% Formally, it is given as a function
%% $\consistent: \sspace \rightarrow 2^\cspace$ that assigns to each
%% sample $S$ the set $\consistent(S)$ of concepts that are consistent
%% with the sample. 
%% We require the consistency relation to satisfy $\consistent(\sbot)
%% = \cspace$, and $\consistent(S_1 \sjoin S_n)
%% = \consistent(S_1) \cap \consistent(S_2)$ for all $S_1,
%% S_2 \in \sspace$.  
The first condition on $\kappa$
says that all concepts are consistent with the empty sample $\sbot$.
The second condition says that the set of samples consistent with the
join of two samples is precisely the set of concepts that is
consistent with both the samples.  Intuitively, this means that
joining samples does not introduce new inconsistencies, and existing
inconsistencies transfer to bigger samples.
The condition that
$\consistent(S_1 \sjoin S_2) \subseteq \consistent(S_1) \cap
\consistent(S_2)$ is natural, as it says that if a concept is
consistent with the join of two samples, then the concept must be
consistent with both of them individually. The condition that
$\consistent(S_1 \sjoin S_2) \supseteq \consistent(S_1) \cap
\consistent(S_2)$ is debatable; it claims that samples when taken
together cannot eliminate a concept that they couldn't eliminate
individually.  We therefore mention the notion of \emph{general ALF}
in Definition~\ref{def:ALF}. However, we have not found any natural
example that requires such a generalization, and therefore prefer to
work with ALFs instead of general ALFs in the rest of the paper. In
Definition~\ref{def:teacher}, we comment on what needs to be adapted to
make the results of the paper go through for general ALFs.

The following simple but useful observation on monotonicity of the
consistency relation (which easily follows from the property of the
consistency relation, for ALFs and general ALFs) is used in some
of the proofs.

\begin{remark}\label{rem:consistency-monotonic}
If $S_1 \sord S_2$, then $\consistent(S_2) \subseteq
\consistent(S_1)$.
\end{remark}

Some other auxiliary definitions we will need: 
We define $\consistent_\hypspace(S) := \{H \in \hypspace \mid \concrete(H) \in \consistent(S)\}$ to be the set of hypotheses that are
consistent with $S$.  For a sample $S \in \sspace$ we say
that \emph{$S$ is realizable} if there exists a hypothesis that is
consistent with $S$ (i.e., $\consistent_\hypspace(S) \not= \emptyset$).

%%%%%%%%%%%%%%%%%%%%%%%%%%%%%%%%%%%%%%%%%%%
% target, learner, teacher
%%%%%%%%%%%%%%%%%%%%%%%%%%%%%%%%%%%%%%%%%%%
\paragraph{ALF Instances and Learners}
An instance of a learning task for an ALF is given by a specification that defines 
target concepts. The goal is to infer a hypothesis whose semantics is
such a target concept. In classical computational learning theory,
this target is a unique concept. In applications for synthesis,
however, there can be many possible target concepts, for example, all
inductive invariants of a program loop.

Formally, a \emph{target specification} is just a set $\target
\subseteq \cspace$ of concepts. An ALF instance combines an ALF
and a target specification:

\begin{definition}[ALF Instance]
An ALF instance is a pair $(A,\target)$ where 
$\A = (\cspace, \hypspace, (\sspace,\sord,\sjoin,\sbot) , \concrete, \consistent)$ 
is an ALF and $\target \subseteq \cspace$ is a target specification.
\end{definition}
\medskip

The goal of learning-based synthesis is for the learner to synthesize
\emph{some} element $H \in \hypspace$ such that $\concrete(H) \in \target$.
Furthermore, the role of the teacher is to instruct the learner giving
reasons why the hypothesis produced by the learner in the current round
does not belong to the target set.

There is a subtle point here worth emphasizing. In synthesis frameworks,
the teacher does not explicitly know the target space $\target$.
Rather she knows a definition of the target space, and she can examine
a hypothesis $H$ and check whether it satisfies the properties required of
the target set. For instance, when synthesizing an invariant for a
program, the teacher knows the properties of the invariant
(inductiveness, etc.)  and gives counterexample samples based on
failed properties.
%% (If she did, there'd be no point in building a teacher and a learner,
%% as we can just return some element of the target set. In fact, she does
%% not even know whether the target set is empty.)
%% Rather she knows only certain properties that define the target space.
%% And when presented with a hypothesis concept $H$ by the learner, the teacher
%% can examine whether $H$ satisfies the \emph{properties required of the target set},
%% and give samples, based on the failed properties, that show why $H$ is not in
%% the target set. For instance, when synthesizing an invariant for a program,
%% the teacher knows the \emph{properties of the invariant} (inductiveness, etc.)
%% and gives counterexample samples based on failed properties, though it
%% does not explicitly know the invariants of the program.

We say that the target specification is \emph{realizable by a
  hypothesis}, or simply realizable, if there is some $H
\in \hypspace$ with $\concrete(H) \in \target$.  For a hypothesis $H
\in \hypspace$, we often write  $H \in \target$ instead of
$\concrete(H) \in \target$.

As in classical computational learning theory, we define a
\emph{learner} (see Figure~\ref{fig:alf}) to be a function that maps samples to hypotheses,
and a consistent learner to be a learner that only proposes consistent hypotheses
for samples.

\begin{definition}
A learner for an ALF $\A = (\cspace, \hypspace, (\sspace,\sord,\sjoin,\sbot) , \concrete, \consistent)$
is a map $\learner : \sspace \rightarrow \hypspace$ that assigns a hypothesis to every sample.
A \emph{consistent learner} is a learner $\learner$ with $\concrete(\learner(S)) \in \consistent(S)$ for all realizable samples
$S \in \sspace$.
\end{definition}

%We are now ready to define the iterative learning setting, in which a learner interacts with a teacher. 
%
%%%%%%%%%%%%%%%%%%%%%%%%%%%%%%%%%%%%%%%%
\paragraph{Iterative learning.}
In the iterative learning setting, the learner produces a hypothesis
starting from some initial sample (e.g., $\sbot$). For each hypothesis
provided by the learner that does not satisfy the target specification,
a teacher (see Figure~\ref{fig:alf}) provides feedback by returning a sample witnessing that the
hypothesis does not satisfy the target specification. 

%% This property is
%% referred to as \emph{progress} because theg learner obtains new
%% information that rules out the current hypothesis. Furthermore, we
%% require that the teacher is \emph{honest} in the sense that it does
%% not provide feedback that rules out target elements.

%Formally,

\begin{definition} \label{def:teacher}
Let $(A, \target)$ be an ALF instance with $\A = (\cspace, \hypspace, (\sspace,\sord,\sjoin,\sbot) , \concrete, \consistent)$,
and $\target \subseteq \cspace$. A \emph{teacher} for this ALF
instance is a function $\teacher: \hypspace \rightarrow \sspace$
that satisfies the following two properties:
\begin{description}
\item[i) Progress:] $\teacher(H) = \sbot$ for each target element $H \in \target$,
  and $\concrete(H) \notin \consistent(\teacher(H))$ for all $H \notin
  \target$, and 
\smallskip
\item[ii) Honesty:] $\target \subseteq \consistent(\teacher(H))$ for each $H
  \in \hypspace$.\kern-.06em\footnote{For general ALFs one has to require that the least upper bound of all samples returned by the teacher is consistent with all targets (and for non-complete sample lattices the least upper bound of all possible finite sets of samples returned by the teacher).}
\end{description}
\end{definition}

Firstly, progress says that if the hypothesis is in the target set, then the teacher
must return the ``empty'' sample $\sbot$, signaling that the learner has learned a target; otherwise, the teacher must return a sample that rules out the current hypothesis. This ensures that a consistent learner
can never propose the same hypothesis again, and hence makes progress.
Secondly, honesty demands that the sample returned by the teacher is 
consistent with \emph{all} target concepts. This ensures that the teacher does not eliminate any
element of the target set arbitrarily. 
%Note that honesty does not constrain the teacher in any way 
%if there are no target elements, which is reasonable if the target specification is not realizable.

When the learner and teacher interact iteratively, the learner produces a sequence of hypotheses,
where in each round it proposes a hypothesis $\learner(S)$ for the current sample $S
\in \sspace$, and then adds the feedback $\teacher(\learner(S))$ of
the teacher to obtain the new sample. 
%% If we iterate this interaction, starting from the empty sample, we obtain
%% an infinite sequence of samples. In case of a complete sample lattice,
%% this is a sequence indexed by ordinals. If the sample is not complete,
%% we obtain a sequence indexed by natural numbers. 
%% The iterative interaction between the learner and teacher is formally defined as follows:

\begin{definition}
Let $(A, \target)$ be an ALF instance with $\A = (\cspace, \hypspace,
(\sspace,\sord,\sjoin,\sbot) , \concrete, \consistent)$, and
$\target \subseteq \cspace$.  Let
$\learner: \sspace \rightarrow \hypspace$ be a learner, and let
$\teacher: \hypspace \rightarrow \sspace$ be a teacher.  The combined
behavior of the learner $\learner$ and teacher $\teacher$ is the
function $f_{\teacher, \learner} : \sspace \rightarrow \sspace$, where
$f_{\teacher, \learner}(S) := S \sjoin \teacher(\learner(S))$.

The sequence of hypotheses generated by the learner $\learner$ and teacher $\teacher$
is the transfinite sequence
 $\seq{S_{\teacher, \learner}^\alpha \mid \alpha \in
  \ordinals}$, where $\ordinals$ denotes the class of all ordinals, obtained by
iterative application of $f_{\teacher, \learner}$:
\begin{itemize}
\item $S_{\teacher,\learner}^0 := \sbot$;
\item $S_{\teacher,\learner}^{\alpha+1} :=
  f_{\teacher,\learner}(S_{\teacher,\learner}^\alpha)$ for successor
  ordinals; and
\item $S_{\teacher,\learner}^\alpha := \bigsjoin_{\beta < \alpha}
  S_{\teacher,\learner}^\beta$ for limit ordinals.
\end{itemize}

If the sample lattice is not complete, the above definition is
restricted to the first two items and yields a sequence indexed by
natural numbers.
\end{definition}

%More formally, for a complete sample lattice we obtain the transfinite
%sequence $\seq{S_{\teacher, \learner}^\alpha \mid \alpha \in
%  \ordinals}$, where $\ordinals$ denotes the class of all ordinals, by
%iterative application of $f_{\teacher, \learner}$:
%\begin{itemize}
%\item $S_{\teacher,\learner}^0 := \sbot$ 
%\item $S_{\teacher,\learner}^{\alpha+1} :=
%  f_{\teacher,\learner}(S_{\teacher,\learner}^\alpha)$ for successor
%  ordinals
%\item $S_{\teacher,\learner}^\alpha := \bigsjoin_{\beta < \alpha}
%  S_{\teacher,\learner}^\beta$
%\end{itemize}
%If the sample lattice is not complete, the above definition is
%restricted to the first two items and yields a sequence indexed by
%natural numbers.

The following lemma states that the teacher's properties of progress
and honesty transfer to the iterative setting for consistent learners
if the target specification is realizable. 
%% The first property below
%% says that the set of concepts consistent with the hypotheses along the
%% sequence of interaction strictly decreases. The second property says 
%% that no element of the target set is ever inconsistent with any hypothesis
%% in the sequence. 

\begin{lemma} \label{lem:progress-honesty}
Let $\target$ be realizable, $\learner$ be a consistent learner, and $\teacher$ be a teacher.
If $\sspace$ is a complete sample lattice, then
\begin{enumerate}%[(a)]
\item the learner makes progress: for all $\alpha \in \ordinals$,
  either $\consistent(S_{\teacher,\learner}^\alpha) \supsetneq
  \consistent(S_{\teacher,\learner}^{\alpha+1})$ and
  $\learner(S_{\teacher,\learner}^\alpha) \notin
  \consistent(S_{\teacher,\learner}^{\alpha+1})$, or
  $\learner(S_{\teacher,\learner}^\alpha) \in \target$, and
\item the sample sequence is consistent with the target specification:
  $\target \subseteq \consistent(S_{\teacher,\learner}^\alpha)$ for
  all $\alpha \in \ordinals$.
\end{enumerate}
If $\sspace$ is a non-complete sample lattice, then (a) and (b) hold
for all $\alpha \in \nat$.
\end{lemma}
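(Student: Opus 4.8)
The plan is to prove both parts simultaneously by transfinite induction on $\alpha$, treating the successor and limit cases separately and relying throughout on the defining properties of the consistency relation, the teacher, and the consistent learner. I would first record the base case: $S^0_{\teacher,\learner} = \sbot$, so $\consistent(S^0) = \cspace \supseteq \target$, giving (b) at $0$; there is nothing to check for (a) at stage $0$ in isolation since it concerns the step from $\alpha$ to $\alpha+1$.

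For the successor step, assume (a) and (b) hold for all ordinals below $\alpha+1$ and in particular that $\target \subseteq \consistent(S^\alpha_{\teacher,\learner})$. Since $\target$ is realizable, pick $H^* \in \hypspace$ with $\concrete(H^*) \in \target \subseteq \consistent(S^\alpha)$; thus $S^\alpha$ is realizable, so by consistency of $\learner$ we have $\concrete(\learner(S^\alpha)) \in \consistent(S^\alpha)$. Now case on whether $\learner(S^\alpha) \in \target$. If it is, the disjunction in (a) is satisfied directly. If not, the progress property of the teacher gives $\concrete(\learner(S^\alpha)) \notin \consistent(\teacher(\learner(S^\alpha)))$; since $S^{\alpha+1} = S^\alpha \sjoin \teacher(\learner(S^\alpha))$ and $\consistent(S_1 \sjoin S_2) = \consistent(S_1) \cap \consistent(S_2)$, it follows that $\concrete(\learner(S^\alpha)) \notin \consistent(S^{\alpha+1})$, which is the second clause of (a). Moreover $\concrete(\learner(S^\alpha))$ lies in $\consistent(S^\alpha)$ but not in $\consistent(S^{\alpha+1})$, so the inclusion $\consistent(S^{\alpha+1}) = \consistent(S^\alpha) \cap \consistent(\teacher(\learner(S^\alpha))) \subseteq \consistent(S^\alpha)$ is strict, giving the first clause of (a). For (b) at $\alpha+1$: by the honesty property $\target \subseteq \consistent(\teacher(\learner(S^\alpha)))$, and by induction $\target \subseteq \consistent(S^\alpha)$, so $\target \subseteq \consistent(S^\alpha) \cap \consistent(\teacher(\learner(S^\alpha))) = \consistent(S^{\alpha+1})$.

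For the limit step, let $\alpha$ be a limit ordinal and assume (b) holds for all $\beta < \alpha$. Then $S^\alpha_{\teacher,\learner} = \bigsjoin_{\beta<\alpha} S^\beta_{\teacher,\learner}$, and since the sample lattice is complete the consistency relation satisfies $\consistent(\bigsjoin(\sspace')) = \bigcap_{S\in\sspace'}\consistent(S)$; applying this to $\sspace' = \{S^\beta : \beta < \alpha\}$ yields $\consistent(S^\alpha) = \bigcap_{\beta<\alpha}\consistent(S^\beta) \supseteq \target$, establishing (b) at $\alpha$. Part (a) concerns a step $\alpha \to \alpha+1$, so no separate limit-case argument is needed for it beyond what the successor step already covers once (b) is known at $\alpha$. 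Finally, for the non-complete case one simply drops the limit ordinals and runs the same base case and successor step over $\nat$; note that the argument there never invokes completeness, only the binary join identity for $\consistent$.

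The only mildly delicate point — and the one I would be most careful about — is the interleaving of the two statements: part (a) at stage $\alpha$ needs part (b) at stage $\alpha$ (to know $S^\alpha$ is realizable and hence that the consistent learner actually proposes a $\consistent(S^\alpha)$-member), and part (b) at stage $\alpha+1$ in turn uses honesty applied to $\learner(S^\alpha)$, which is well-defined regardless. So the clean way to organize the transfinite induction is to prove the conjunction ``(a) holds for all $\beta < \alpha$ and (b) holds for all $\beta \le \alpha$'' by induction on $\alpha$, which makes the dependency acyclic. Everything else is a direct unfolding of the ALF axioms on $\consistent$ and the teacher's progress/honesty conditions; there is no real combinatorial obstacle.
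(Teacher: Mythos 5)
Your proof is correct and follows essentially the same route as the paper's: a transfinite induction in which (b) is established first at each stage (base case $\consistent(\sbot)=\cspace$, successor via honesty and the join identity, limit via completeness of $\consistent$ over arbitrary joins), and (a) then uses (b) to conclude that $S^\alpha$ is realizable so the consistent learner's hypothesis lies in $\consistent(S^\alpha)$, with progress of the teacher giving strictness. The only cosmetic difference is that you invoke the identity $\consistent(S_1\sjoin S_2)=\consistent(S_1)\cap\consistent(S_2)$ directly where the paper cites its monotonicity remark; your explicit bookkeeping of the induction hypothesis (``(a) below $\alpha$, (b) up to $\alpha$'') is a slightly more careful statement of the same dependency the paper uses implicitly.
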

\begin{proof}
The proof is a straight-forward transfinite induction, using the
properties of the teacher and the consistency relation. For the case
of non-complete sample lattice, ignore the limit step in the proof
below.

For part (b), the induction base is given by $\consistent(\sbot) =
\cspace$. The induction step for limit ordinals directly follows from
the property of the consistency relation: all previous samples are
consistent with the target specification, so their join is, too.\footnote{For general
  ALFs one uses the property that the least upper bound  of all
  samples returned by the teacher is consistent with all targets.} 
 For
a successor ordinal $\alpha+1$ it follows from the fact that
$S_{\teacher,\learner}^{\alpha+1}$ is a join of two samples that are
both consistent with the target specification.

For part (a), let $\alpha \in \ordinals$ such that $H :=
\learner(S_{\teacher,\learner}^\alpha) \notin \target$. Then
$S_{\teacher,\learner}^{\alpha+1} = S_{\teacher,\learner}^\alpha \cup
S$ with $S = \teacher(H)$. In particular,
$S_{\teacher,\learner}^\alpha \sord S_{\teacher,\learner}^{\alpha+1}$
and thus $\consistent(S_{\teacher,\learner}^\alpha) \supseteq
\consistent(S_{\teacher,\learner}^{\alpha+1})$ by
Remark~\ref{rem:consistency-monotonic}. For the strictness of the
inclusion, note that $\concrete(H) \in
\consistent(S_{\teacher,\learner}^\alpha)$ because $\learner$ is a
consistent learner (and $S_{\teacher,\learner}^\alpha$ is realizable
because it is consistent with the realizable target specification by
(b)). Furthermore, $\concrete(H) \notin \consistent(S)$ by the
progress property of the teacher, and hence $\concrete(H) \notin
\consistent(S_{\teacher,\learner}^{\alpha+1})$.  \qed
\end{proof}

We end with an example of an ALF. Consider the problem of synthesizing
guarded affine functions that capture how a piece of code $P$ behaves,
as in program deobfuscation.\label{ex:deobfuscation} Then the concept
class could be all functions from $\mathbb{Z}^n$ to $\mathbb{Z}$, the
hypothesis space would be the set of all expressions describing a
guarded affine function (in some fixed syntax). The target set (as a subset
of ${\cal C}$) would
consist of a \emph{single} function $\{f_t\}$, where $f_t$ is the
function computed by the program $P$. For any hypothesis function $h$,
let us assume we can build a teacher who can compare $h$ and $P$ for
equivalence, and, if they differ, return a counterexample of the form
$(\vec{i},o)$, which is a concrete input $\vec{i}$ on which $h$ differs
from $P$, and $o$ is the output of $P$ on $\vec{i}$.  Then the sample
space would consist of sets of such pairs (with union for join and
empty set for $\sbot$). 
The set of functions consistent with a set of samples would be the
those that map the inputs mentioned in the samples to their appropriate
outputs.
The iterative learning will then model the
process of synthesis, using learning, a guarded affine function that
is equivalent to $P$.

%---------- Convergence ----------
\section{Convergence of iterative learning} \label{sec:convergence}
% !TeX root = main.tex

In this section, we study convergence of the iterative learning
process.  We start with a general theorem on transfinite
convergence (convergence in the limit) for complete sample lattices. 
We then turn to convergence in finite time and exhibit three recipes
that guarantee convergence.

%%%%%%%%%%%%%%%%%%%%%%%%%%%%%%%%%%%%%%%%
%\subsection{Transfinite convergence}\label{sec:transfinite_convergence}
From Lemma~\ref{lem:progress-honesty} one can conclude that the
transfinite sequence of hypotheses constructed by the learner
converges to a target set. 

\begin{theorem} \label{the:transfinite-convergence}
Let $\sspace$ be a complete sample lattice, $\target$ be realizable,
$\learner$ be a consistent learner, and $\teacher$ be a teacher. Then
there exists an ordinal $\alpha$ such that
$\learner(S_{\teacher,\learner}^\alpha) \in \target$.
\end{theorem}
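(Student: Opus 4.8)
The plan is to derive this as a cardinality/well-foundedness argument from Lemma~\ref{lem:progress-honesty}. Suppose, for contradiction, that $\learner(S_{\teacher,\learner}^\alpha) \notin \target$ for every ordinal $\alpha \in \ordinals$. Then part (a) of Lemma~\ref{lem:progress-honesty} applies at every stage, and in particular its first disjunct holds for every $\alpha$: $\consistent(S_{\teacher,\learner}^\alpha) \supsetneq \consistent(S_{\teacher,\learner}^{\alpha+1})$. So the successor steps strictly shrink the consistency set.

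First I would argue that the map $\alpha \mapsto \consistent(S_{\teacher,\learner}^\alpha)$ is (weakly) decreasing along the whole transfinite sequence: for successor steps this is the strict inclusion just noted, and for limit ordinals $\alpha$ we have $S_{\teacher,\learner}^\alpha = \bigsjoin_{\beta<\alpha} S_{\teacher,\learner}^\beta$, so by the completeness property of the consistency relation $\consistent(S_{\teacher,\learner}^\alpha) = \bigcap_{\beta<\alpha}\consistent(S_{\teacher,\learner}^\beta) \subseteq \consistent(S_{\teacher,\learner}^\beta)$ for each $\beta<\alpha$. Hence $\beta \le \alpha$ implies $\consistent(S_{\teacher,\learner}^\alpha) \subseteq \consistent(S_{\teacher,\learner}^\beta)$, and the inclusion is strict whenever $\beta < \alpha$ because one can interpolate a successor step: $\consistent(S_{\teacher,\learner}^\alpha) \subseteq \consistent(S_{\teacher,\learner}^{\beta+1}) \subsetneq \consistent(S_{\teacher,\learner}^\beta)$. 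Thus $\alpha \mapsto \consistent(S_{\teacher,\learner}^\alpha)$ is a strictly decreasing, injective map from the proper class $\ordinals$ into the powerset $2^\cspace$.

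This is the contradiction: $2^\cspace$ is a set, so it cannot receive an injection from the proper class of all ordinals (equivalently, pick any cardinal $\kappa > |2^\cspace|$; the restriction of the sequence to ordinals below $\kappa$ would already be an injection of a set of cardinality $\kappa$ into a set of strictly smaller cardinality). Therefore the assumption fails, and there is some ordinal $\alpha$ with $\learner(S_{\teacher,\learner}^\alpha) \in \target$. (To stay within ZFC without invoking proper classes, I would phrase the whole argument with $\alpha$ ranging over ordinals below a fixed sufficiently large cardinal $\kappa$, which is all that the iteration definition really needs.)

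The only genuinely delicate point is handling the limit ordinals correctly — this is exactly where the completeness hypothesis on the sample lattice and the corresponding intersection property $\consistent(\bigsjoin \sspace') = \bigcap_{S\in\sspace'}\consistent(S)$ are used, and without them the consistency set could fail to keep shrinking across a limit. Everything else is bookkeeping: the strict-decrease-at-successors comes straight from Lemma~\ref{lem:progress-honesty}(a), realizability of every $S_{\teacher,\learner}^\alpha$ (needed so the consistent learner actually produces a consistent hypothesis) comes from Lemma~\ref{lem:progress-honesty}(b) together with realizability of $\target$, and the final step is just the observation that a proper class cannot inject into a set.
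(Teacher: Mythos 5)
Your proof is correct and follows essentially the same route as the paper's: assume the learner fails at every ordinal, invoke Lemma~\ref{lem:progress-honesty} for strict progress at each successor step (with monotonicity across limits supplied by completeness of the sample lattice), and derive a cardinality contradiction. The only cosmetic difference is that you track the strictly shrinking sets $\consistent(S_{\teacher,\learner}^\alpha)$ inside $2^\cspace$, whereas the paper tracks the pairwise-distinct hypotheses $\learner(S_{\teacher,\learner}^\beta)$ inside $\hypspace$ and bounds the length by $|\hypspace|$.
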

\begin{proof}
Let $\alpha$ be an ordinal with cardinality bigger than $|\hypspace|$
(bigger than $|\sspace|$ also works). If
$\learner(S_{\teacher,\learner}^\beta) \not\in \target$ for all $\beta
< \alpha$, then Lemma~\ref{lem:progress-honesty}~(a) implies that all
$\learner(S_{\teacher,\learner}^\beta)$ for $\beta < \alpha$ are
pairwise different, which contradicts the cardinality assumption. \qed
\end{proof}

The above theorem ratifies the
choice of our definitions, and the proof (relying on Lemma~\ref{lem:progress-honesty}) crucially uses all aspects of our 
definitions (the honesty and progress properties of the teacher, the condition imposed on $\kappa$ in an ALF, the notion of consistent
learners, etc.).

%% \begin{remark} \label{rem:target-hypothesis}
%% Assume that $\target \not= \emptyset$ and that there is a honest
%% teacher. If the hypotheses space does not contain any elements from
%% $\target$, then Theorem~\ref{the:transfinite-convergence} implies that
%% there is no consistent learner.
%% \end{remark}

%%%%%%%%%%%%%%%%%%%%%%%%%%%%%%%%%%%%%%%%
%\subsection{Convergence in finite time}\label{sec:finite_convergence}
Convergence in finite time is clearly the more desirable notion, and we
propose tactics for designing learners that converge in
finite time.
For an ALF instance $(\mathcal{A},\target)$, we say that a learner
$\learner$ \emph{converges for a teacher $\teacher$} if there is an $n
\in \nat$ such that $\learner(S_{\teacher,\learner}^n) \in \target$,
which means that $\learner$ produces a target hypothesis after $n$
steps. We say that $\learner$ converges if it converges for every 
teacher. We say that $\learner$ converges from a sample $S$ in case
the learning process starts from a sample $S \not= \sbot$ (i.e., if
$S_{\learner,\teacher}^0 = S$).

%Note that in ALF instances that do not have a teacher (because the
%sample space is not rich enough to exclude wrong hypotheses without
%excluding target hypotheses) every learner is convergent according to
%the above definition. So the strategies for convergence in finite time
%presented in this section are independent of the existence of a
%teacher (but of course they are only meaningful for those ALF
%instances that have a teacher).\todo{Added some words on ALFs without teachers}

%The first case where we get automatic convergence in finite time (for any consistent learner)
%is when the hypothesis space is finite, and this directly follows from the progress
%and honesty properties of the teacher. The second and third 
%recipes that we propose are based on quasi-orders (transitive and
%reflexive relations) on the hypothesis space. The second recipe uses a
%complexity measure for hypotheses, and the learner constructs a
%smallest consistent hypothesis for a given sample. We refer to such
%learners as Occam learners (in analogy to Occam learners in the
%setting of PAC learning \cite{KearnsV94}). The second strategy is based on
%well-founded quasi-orders. For each of these recipes, we point to existing
%convergent algorithms in the literature that are instances of these recipes;
%the recipes exhaust all convergent algorithms we know for convergence in existing
%synthesis application.

\subsubsection{Finite hypothesis spaces}\label{sec:finite_hypothesis_space}
We first note that if the hypothesis space (or the concept
space) is finite, then any consistent learner converges: by Lemma~\ref{lem:progress-honesty}, the learner always
makes progress, and hence never proposes two hypotheses that
correspond to the same concept. Consequently, 
the learner only produces a finite number of hypotheses before finding one
that is in the target (or declare that no such hypothesis exists).

There are several synthesis engines using learning that use finite hypothesis spaces. For example, Houdini~\cite{houdini} is a learner of \emph{conjunctions} over a fixed finite set of predicates and, hence, has a finite hypothesis space. Learning decision trees over purely Boolean attributes (not numerical)~\cite{DBLP:books/mk/Quinlan93} is also convergent because of finite hypothesis spaces, and this extends to the ICE learning model as well~\cite{ICEML}. Invariant generation for arrays and lists using \emph{elastic QDAs}~\cite{CAVQDA} also uses a convergence argument that relies on a finite hypothesis space.

\subsubsection{Occam Learners}\label{sec:occam_learner}
We now discuss the most robust strategy we know for convergence, based on the Occam's razor principle. Occam's razor advocates parsimony or simplicity~\cite{sep_simplicity},
that the simplest concept/theory that explains a set of observations is better, as a virtue in itself.
There are several learning algorithms that use parsimony as a learning bias in machine learning
(e.g., \emph{pruning} in decision-tree learning~\cite{mitchell}), though the general applicability of Occam's razor
in machine learning as a sound means to generalize is debatable~\cite{DBLP:journals/datamine/Domingos99}.
We now show that in \emph{iterative} learning, following Occam's principle leads
to convergence in finite time. However, the role of \emph{simplicity} itself is not the technical 
reason for convergence, but that there is \emph{some} ordering of concepts that biases the learning.

%The first learning strategy for convergence in finite time captures,
%in particular, ``enumerative'' learners that have access to some kind
%of enumeration of the hypotheses, and always produce the first
%hypothesis in this enumeration that is consistent with the sample. An
%enumeration of the hypotheses corresponds to a total order on the
%hypothesis space such that each hypothesis has only finitely many
%other hypotheses before it. 

Enumerative learners are a good example of this. In enumerative learning, the
learner enumerates hypotheses in some order, and always conjectures the first
consistent hypothesis. In an iterative learning-based synthesis
setting, such a learner always converges on some target concept, if one exists, in finite time.

Requiring a total order of the hypotheses is in some situations too
strict. If, for example, the hypothesis space consists of
deterministic finite automata (DFAs), we could build a learner that
always produces a DFA with the smallest possible number of states that
is consistent with the given sample. However, the relation
$\complexord$ that compares DFAs w.r.t.\ their number of states is not
an ordering because there are different DFAs with the same number of
states. 

In order to capture such situations, we work with a \emph{total
  quasi-order} $\complexord$ on $\hypspace$ instead of a total
order. A quasi-order (also called preorder) is a transitive and
reflexive relation. The relation being total means that $H \complexord
H'$ or $H' \complexord H$ for all $H,H' \in \hypspace$. The difference
to an order relation is that $H \complexord H'$ and $H' \complexord H$
can hold in a quasi-order, even if $H \not= H'$.

In analogy to enumerations, we require that each hypothesis has only
finitely many hypotheses ``before'' it w.r.t.\ $\complexord$, as
expressed in the following definition.

\begin{definition}
A \emph{complexity ordering} is a total quasi-order $\complexord$ such
that for each $x \in \hypspace$ the set $\{y \in \hypspace \mid y \complexord
x\}$ is finite.
%\todo{Review 2: similar definition appeared previously. One I like is called ``discrete'' wqo by J. Bingham and A. Hu (\url{http://dx.doi.org/10.1007/978-3-540-31980-1_6})}
\end{definition}

The example of comparing DFAs with respect to their number of states is such a
complexity ordering.

\begin{definition}
A consistent learner that always constructs a smallest hypothesis with
respect to a complexity ordering $\preceq$ on $\hypspace$ is called an
\emph{$\complexord$-Occam learner}.
\end{definition}

\begin{example} \label{ex:complexity-ordering}
Consider $\hypspace = \cspace$ to be the interval domain over the integers
consisting of all intervals of the form $[l,r]$, where $l,r \in
\mathbb{Z} \cup \{-\infty,\infty\}$ and $l \le r$.  We define $[l,r]
\complexord [l',r']$ if either $[l,r] = [-\infty,\infty]$ or
$\max\{|x| \mid x \in \{l,r\} \cap \mathbb{Z}\} \le \max\{|x| \mid
x \in \{l',r'\} \cap \mathbb{Z}\}$.  For example, $[-4,\infty]
\complexord [1,7]$ because $4 \le 7$. 
This ordering $\complexord$ satisfies the property that for each
interval $[l,r]$ the set $\{[l',r'] \mid [l',r'] \complexord [l,r]\}$
is finite (because there are only finitely many intervals using
integer constants with a bounded absolute value). A standard
positive/negative sample $S= (P,N)$ with $P,N
\subseteq \nat$ is consistent with all intervals that
contain the elements from $P$ and do not contain an element from $N$.
A learner that maps $S$ to an interval that uses integers with the
smallest possible absolute value (while being consistent with $S$) is
an $\complexord$-Occam learner. For example, such a learner would map
the sample $(P=\{-2,5\},N=\{-8\})$ to the interval
$[-2,\infty]$. \hfill\exampleend
\end{example}

The next theorem shows that $\complexord$-Occam learners ensure
convergence in finite time. 

\begin{theorem} \label{the:occam-learner}
If $\target$ is realizable and $\learner$ is a $\complexord$-Occam
learner, then $\learner$ converges.  Furthermore, the
learner converges to a $\complexord$-minimal target element.
\end{theorem}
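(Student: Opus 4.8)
The plan is to leverage honesty of the teacher in its iterated form, Lemma~\ref{lem:progress-honesty}(b), to confine every hypothesis the learner can ever output to a fixed \emph{finite} set, and then use progress, Lemma~\ref{lem:progress-honesty}(a), to argue that the learner cannot stay inside that finite set forever without hitting a target. Throughout, I work with the $\nat$-indexed sequence $S^n := S_{\teacher,\learner}^n$, which is well defined regardless of whether the sample lattice is complete, and I abbreviate $\hypspace_\target := \{H \in \hypspace \mid \concrete(H) \in \target\}$. First I would pin down a $\complexord$-minimal target hypothesis: since $\target$ is realizable, $\hypspace_\target \neq \emptyset$; fix any $H_0 \in \hypspace_\target$. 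The down-set $\{H \in \hypspace \mid H \complexord H_0\}$ is finite by the complexity-ordering condition, so its (nonempty) intersection with $\hypspace_\target$ is a finite subset of a total quasi-order and therefore has a $\complexord$-minimal element $H^*$; a one-line transitivity argument upgrades this to $H^*$ being $\complexord$-minimal in all of $\hypspace_\target$. Set $B := \{H \in \hypspace \mid H \complexord H^*\}$, which is finite.

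Next I would show that $\learner(S^n) \in B$ for every $n \in \nat$. By Lemma~\ref{lem:progress-honesty}(b) we have $\target \subseteq \consistent(S^n)$, hence $\concrete(H^*) \in \consistent(S^n)$, i.e.\ $H^* \in \consistent_\hypspace(S^n)$. Since a $\complexord$-Occam learner outputs a $\complexord$-smallest hypothesis consistent with the sample, $\learner(S^n) \complexord H^*$, so $\learner(S^n) \in B$.

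Then I would argue that the learner never repeats a non-target hypothesis across rounds, so that finiteness of $B$ forces termination. Suppose, for contradiction, $\learner(S^n) \notin \target$ for all $n$. By Lemma~\ref{lem:progress-honesty}(a), for each $n$ we get $\concrete(\learner(S^n)) \notin \consistent(S^{n+1})$, and since $S^{n+1} \sord S^m$ for all $m > n$, Remark~\ref{rem:consistency-monotonic} gives $\concrete(\learner(S^n)) \notin \consistent(S^m)$ for all $m > n$. On the other hand, each $S^m$ is realizable (it is consistent with the realizable $\target$), so the consistent learner satisfies $\concrete(\learner(S^m)) \in \consistent(S^m)$; therefore $\concrete(\learner(S^m)) \neq \concrete(\learner(S^n))$, and a fortiori $\learner(S^m) \neq \learner(S^n)$, for all $m > n$. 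Thus $\{\learner(S^n) \mid n \in \nat\}$ would be an infinite set of pairwise distinct hypotheses contained in the finite set $B$ — a contradiction. Hence $\learner(S^n) \in \target$ for some $n$, i.e.\ $\learner$ converges for $\teacher$; as $\teacher$ was arbitrary, $\learner$ converges.

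Finally, for the minimality addendum: at the first step $n$ with $\learner(S^n) \in \target$ we have already established $\learner(S^n) \complexord H^*$; since $H^*$ is $\complexord$-minimal in $\hypspace_\target$ and $\learner(S^n) \in \hypspace_\target$, this forces $H^* \complexord \learner(S^n)$ as well, so $\learner(S^n)$ is $\complexord$-equivalent to $H^*$ and is therefore itself a $\complexord$-minimal target element. The main obstacle — though it is conceptual rather than technical — is the ``finite box'' observation: recognizing that honesty keeps the whole target set inside every consistent set, and that a complexity ordering (finite down-sets, not well-ordering) is exactly what is needed both to extract $H^*$ and to make $B$ finite. The step that distinct rounds yield distinct hypotheses is routine bookkeeping on top of Lemma~\ref{lem:progress-honesty}(a) and monotonicity of $\consistent$.
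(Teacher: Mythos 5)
Your proposal is correct and follows essentially the same route as the paper's proof: honesty (Lemma~\ref{lem:progress-honesty}(b)) keeps every target consistent with all samples, so the Occam learner's outputs stay in the finite down-set of a target hypothesis, and progress (Lemma~\ref{lem:progress-honesty}(a)) forbids repetition, forcing convergence to a $\complexord$-minimal target. Your extra bookkeeping (pre-extracting a minimal target $H^*$ and spelling out the non-repetition argument via monotonicity of $\consistent$) only makes explicit what the paper leaves implicit.
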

\begin{proof}
Pick any target element $T \in \hypspace$, which exists because
$\target$ is realizable. Since $\teacher$ is honest,
$T \in \consistent(S_{\teacher,\learner}^n)$ for all $n$ by
Lemma~\ref{lem:progress-honesty}(b). Thus, on the iterated sample
sequence, a $\complexord$-Occam learner never constructs an element
which is strictly above $T$ w.r.t.\ $\preceq$. Since there are only
finitely many hypothesis that are not strictly above $T$, and since
the learner always makes progress according to
Lemma~\ref{lem:progress-honesty}, it converges to a target element in
finitely many steps, which itself does not have any other target
elements below, and thus is $\complexord$-minimal. \qed
\end{proof}

There are several existing algorithms in the literature that use such
orderings to ensure convergence. Several enumeration-based solvers are
convergent because of the ordering of enumeration (e.g., the
generic enumerative solver for SyGuS
problems~\cite{DBLP:conf/fmcad/AlurBJMRSSSTU13,DBLP:series/natosec/AlurBDF0JKMMRSSSSTU15}).
The invariant-generation ranging over conditional linear arithmetic
expressions described in~\cite{DBLP:conf/cav/0001LMN14} ensures
convergence using a total quasi-order based on the number of
conditionals and the values of the coefficients.  The learner uses
templates to restrict the number of conditionals and a
constraint-solver to find small coefficients for linear constraints.

%% Another example in the work reported in~\cite{DBLP:conf/cav/0001LMN14}, invariant-generation
%% ranging over conditional linear arithmetic expressions can be biased towards those that
%% have smaller number of conditionals and smaller coefficients. This defines a total quasi-order,
%% and the learner uses \emph{templates} to restrict the number of conditionals and a constraint-solver to
%% find coefficients for linear constraints and constrains them further with the simplicity measure,
%% thereby assuring convergence. \todo{What about protocol synthesis?}

%{\bf *********** TODO} {Mention examples of Occam learners from the literature? Or put a
%link to the examples section and mention such learners there?}

%%%%%%%%%%%%%%%%%%%%%%%%%%%%%%%%%%%%%%%%
%\subsection*{Products of ALFs}

%%%%%%%%%%%%%%%%%%%%%%%%%%%%%%%%%%%%%%%%
\subsubsection{Convergence using Tractable Well Founded Quasi-Orders}
The third strategy for convergence in finite time that we propose is based on well
founded quasi-orders, or simply well-quasi-orders. 
Interestingly, we know of no existing learning algorithms in the literature
that uses this recipe for convergence (a technique of similar flavor is used in \cite{Blum92}). We exhibit in this section a learning
algorithm for intervals and for conjunctions of inequalities of numerical attributes based on this recipe. 
A salient feature of this recipe is that the convergence actually uses the
samples returned by the teacher in order to converge (the first two
recipes articulated above, on the other hand, would even guarantee convergence if the teacher
just replies yes/no when asked whether the hypothesis is in the target set).

A binary relation $\wqo$ over some set $X$ is a well-quasi-order if it is transitive and
reflexive, and for each infinite sequence $x_0,x_1,x_2, \ldots$ there
are indices $i < j$ such that $x_i \wqo x_j$. In other words, there
are no infinite descending chains and no infinite anti-chains for
$\wqo$.

\begin{definition}
Let $(\A, \target)$ be an ALF instance with
$\A = (\cspace, \hypspace, (\sspace,\sord,\sjoin,\sbot) , \concrete, \consistent)$. A subset of hypotheses $\W \subseteq \hypspace$ is called
\emph{wqo-tractable} if 
\begin{enumerate}[nosep]
\item there is a  well-quasi-order $\wqo_{\W}$ on
$\W$, and
\item for each realizable sample $S \in \sspace$ with
  $\consistent_\hypspace(S) \subseteq \W$, there is some
  $\wqo_\W$-maximal hypothesis in $\W$ that is consistent with $S$.
\end{enumerate}
\end{definition}

\begin{example} \label{ex:wqo-tractable}
Consider again the example of intervals over $\mathbb{Z} \cup
\{-\infty, \infty\}$ with samples of the form $S = (P,N)$ (see
Example~\ref{ex:complexity-ordering}). Let $p \in \mathbb{Z}$ be a point
and let $\mathcal{I}_p$ be the set of all intervals that contain the
point $p$. Then, $\mathcal{I}_p$ is wqo-tractable with the standard
inclusion relation for intervals, defined by $[\ell,r] \subseteq [\ell',r']$
iff $\ell \ge \ell'$ and $r \le r'$. Restricted to intervals that
contain the point $p$, this is the product of two well-founded orders
on the sets $\{x \in \mathbb{Z} \mid x \le p\}$ and $\{x \in \mathbb{Z} \mid x \ge
p\}$, and as such is itself well-founded \cite[Theorem~2.3]{Higman52}.
Furthermore, for each realizable sample $(P,N)$, there is a unique
maximal interval over $\mathbb{Z} \cup \{-\infty,\infty\}$ that
contains $P$ and excludes $N$. Hence, the two conditions of
wqo-tractability are satisfied.
(Note that this ordering on the set of \emph{all} intervals is not a well-quasi-order;
 the sequence $[-\infty, 0], [-\infty, -1], [-\infty, -2], \ldots$ witnesses this.) \hfill\exampleend
\end{example}

On a wqo-tractable $\W \subseteq \hypspace$ a learner can ensure
convergence by always proposing a maximal consistent hypothesis, as
stated in the following lemma.
\begin{lemma} \label{lem:wqo-tractable}
Let $\target$ be realizable, $\W \subseteq \hypspace$ be wqo-tractable
with well-quasi-order $\wqo_\W$, and $S$ be a sample such that
$\consistent_\hypspace(S) \subseteq \W$. Then, there exists a learner that
converges from the sample $S$.
\end{lemma}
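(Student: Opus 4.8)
The plan is to let the learner always propose a $\wqo_\W$-maximal consistent hypothesis from $\W$, and then use the well-quasi-order property to show that such a learner cannot run forever without producing a target. Define the learner $\learner$ by: on every realizable sample $S'$ with $\consistent_\hypspace(S') \subseteq \W$, let $\learner(S')$ be a hypothesis that is $\wqo_\W$-maximal among the hypotheses of $\W$ consistent with $S'$ (one exists by the second condition of wqo-tractability); on all other samples $\learner$ may be defined arbitrarily. By construction $\learner$ is a consistent learner. Write $S^n$ for $S_{\teacher,\learner}^n$, so that $S^0 = S$. We may assume $S$ is consistent with some target hypothesis $T$; this holds in every setting in which the lemma is used (where $S$ results from honest teacher feedback) and is plainly necessary for convergence.

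The first step is to check that along the run $\learner$ is only queried on samples of the first kind, so that the chosen maximal hypothesis is well defined at every round. Since $S^0 \sord S^n$, Remark~\ref{rem:consistency-monotonic} gives $\consistent(S^n) \subseteq \consistent(S)$, hence $\consistent_\hypspace(S^n) \subseteq \consistent_\hypspace(S) \subseteq \W$, for all $n$. By honesty of $\teacher$ and the ALF property of $\consistent$, the target hypothesis $T$ satisfies $\concrete(T) \in \consistent(\teacher(H))$ for every $H$, and a straightforward induction (as in the proof of Lemma~\ref{lem:progress-honesty}(b), but started from $S$) shows $\concrete(T) \in \consistent(S^n)$ for all $n$; in particular every $S^n$ is realizable. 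Thus $H_n := \learner(S^n)$ is always defined, is consistent with $S^n$, and is $\wqo_\W$-maximal among the hypotheses consistent with $S^n$.

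For the main step, suppose toward a contradiction that $H_n \notin \target$ for every $n$ (for some teacher $\teacher$). By the progress property of $\teacher$ and consistency of $\learner$ (as in the proof of Lemma~\ref{lem:progress-honesty}(a)), $\concrete(H_n) \notin \consistent(S^{n+1})$, and since $S^{n+1} \sord S^j$ for $j \ge n+1$ also $\concrete(H_n) \notin \consistent(S^j)$; in particular $\concrete(H_i) \ne \concrete(H_j)$ for $i < j$, so $H_0, H_1, H_2, \dots$ is an infinite sequence of pairwise distinct hypotheses, all in $\W$. Now for $i < j$, monotonicity gives $\consistent_\hypspace(S^j) \subseteq \consistent_\hypspace(S^i)$, so $H_j$ is consistent with $S^i$; hence $\wqo_\W$-maximality of $H_i$ among the hypotheses consistent with $S^i$ forbids $H_i$ being strictly $\wqo_\W$-below $H_j$. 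But since $\wqo_\W$ is a well-quasi-order there are $i < j$ with $H_i \wqo_\W H_j$; as $\wqo_\W$ is antisymmetric (e.g.\ the interval-inclusion order of Example~\ref{ex:wqo-tractable}) and $H_i \ne H_j$, this gives $H_i$ strictly $\wqo_\W$-below $H_j$ --- a contradiction. (For a genuinely non-antisymmetric $\wqo_\W$, the sequence is still ``bad'', which forces $H_i$ and $H_j$ to be $\wqo_\W$-equivalent; this needs a little extra care, e.g.\ by having the learner never re-propose a $\wqo_\W$-equivalence class already eliminated by the teacher.) Hence $\learner$ converges from $S$ for every teacher.

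The step I expect to be the main obstacle is the bookkeeping of the first step --- verifying that every sample reachable from $S$ stays inside $\W$ and remains realizable, so that wqo-tractability can be invoked uniformly at every round --- together with the minor but genuine point that a well-quasi-order need not be antisymmetric. Once the hypothesis sequence is shown to be ``bad'' for $\wqo_\W$, the contradiction with well-quasi-orderedness is immediate.
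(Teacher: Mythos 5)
Your proof is correct and follows essentially the same route as the paper's: the learner always proposes a $\wqo_\W$-maximal hypothesis consistent with the current sample, and an infinite run would yield $i<j$ with $H_i \wqo_\W H_j$ and $H_j$ consistent with $S_i$, contradicting maximality. The two subtleties you flag --- that every sample along the run must remain realizable (which needs $S$ itself to be consistent with some target, as it is in the intended use in Theorem~\ref{the:wqo-learner}) and that a well-quasi-order need not be antisymmetric so $H_i \wqo_\W H_j$ alone does not immediately contradict maximality --- are genuine, and the paper's own proof silently glosses over both, so your added care is a refinement rather than a divergence.
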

\begin{proof}
For any sample $S'$ with $S \sord S'$, the set
$\consistent_\hypspace(S')$ of hypotheses consistent with $S'$ is a
subset of $\W$. Therefore, there is some $\wqo_\W$-maximal element in
$\W$ that is consistent with $S'$. The strategy of the learner is to
return such a maximal hypothesis. Assume, for the sake of
contradiction, that such a learner does not converge from $S$ for some
teacher $\teacher$.  Let $H_0,H_1, \ldots$ be the infinite sequence of
hypothesis produced by $\learner$ and $\teacher$ starting from $S$,
and let $S_0,S_1,S_2, \ldots$ be the corresponding sequence of samples
(with $S_0 = S$).  The well-foundedness of $\wqo_\W$ implies that
there are $i < j$ with $H_i \wqo_\W H_j$. However, $S_i \sord S_j$
because $S_j$ is obtained from $S_i$ by joining answers of the
teacher. Therefore, $H_j$ is also consistent with $S_i$
(Remark~\ref{rem:consistency-monotonic}). This contradicts the
choice of $H_i$ as a maximal hypothesis that is consistent with $S_i$. \qed
\end{proof}
As shown in Example~\ref{ex:wqo-tractable}, for each $p \in
\mathbb{Z}$, the set $\mathcal{I}_p$ of intervals containing $p$ is
wqo-tractable. Using this, we can build a convergent learner starting
from the empty sample $\sbot$. First, the learner starts by proposing the empty interval,
the teacher must either confirm that this is a target or return a positive example, that is, a point $p$
that is contained in every target interval. Hence, the set of hypotheses
consistent with this sample is wqo-tractable and the learner can
converge from here on as stated in Lemma~\ref{lem:wqo-tractable}. 
%
%This learner would propose $[-\infty,+\infty]$, which is the largest interval consistent with
%the sample; the teacher would be forced to return a negative sample $n$, that does not exist
%in \emph{any} target sample. The learner would then pick the unique maximal interval that includes
%$p$ and excludes $n$ (as per the description of the learner in the proof above), and would
%continue doing this till it converges or reaches a target hypothesis. 
In general, the strategy for the learner is to force in one step a sample $S$ such that the set $\consistent_\hypspace(S) =
\mathcal{I}_p$ is wqo-tractable. This is generalized in the following
definition. 

\begin{definition}
We say that an \emph{ALF is wqo-tractable} if there is a finite set
$\{H_1, \ldots, H_n\}$ of hypotheses such that $\kappa_\hypspace(S)$
is wqo-tractable for all samples $S$ that are inconsistent with all
$H_i$, that is, $\consistent_\hypspace(S) \cap
\{H_1, \ldots, H_n\} = \emptyset$. 
\end{definition}

As explained above, the interval
ALF is wqo-tractable with the set $\{H_1, \ldots, H_n\}$ consisting
only of the empty interval.

Combining all the previous observations, we obtain convergence for
wqo-tractable ALFs.

\begin{theorem} \label{the:wqo-learner}
For every ALF instance $(\A,\target)$ such that $\A$ is wqo-tractable
and  $\target$ is realizable, there is a convergent learner.
\end{theorem}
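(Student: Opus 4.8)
The plan is to combine the definition of wqo-tractability with Lemma~\ref{lem:wqo-tractable}, using the finite set $\{H_1,\ldots,H_n\}$ as a set of ``probe'' hypotheses that the learner proposes first in order to force the teacher to reveal enough information. Concretely, I would have the learner operate in two phases. In the first phase, starting from $\sbot$, the learner proposes $H_1$; if the teacher returns $\sbot$, then $H_1 \in \target$ and we are done, otherwise the teacher returns a sample $S_1 = \teacher(H_1)$ with $\concrete(H_1) \notin \consistent(S_1)$ by the progress property. The learner then proposes $H_2$ against the accumulated sample $S_1$, and so on, walking through $H_1,\ldots,H_n$. After at most $n$ rounds, either some $H_i$ was declared a target (and we are done), or we have accumulated a sample $S^\ast = S_1 \sjoin \cdots \sjoin S_n$ with $\concrete(H_i) \notin \consistent(S_i) \supseteq \consistent(S^\ast)$ for every $i$ (using Remark~\ref{rem:consistency-monotonic}), hence $\consistent_\hypspace(S^\ast) \cap \{H_1,\ldots,H_n\} = \emptyset$.

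By the definition of a wqo-tractable ALF, this means $\consistent_\hypspace(S^\ast)$ is wqo-tractable. In the second phase, the learner switches to the strategy of Lemma~\ref{lem:wqo-tractable}: from $S^\ast$ onward it always proposes a $\wqo_\W$-maximal hypothesis consistent with the current sample, where $\W = \consistent_\hypspace(S^\ast)$. Since $S^\ast$ is realizable (it is consistent with the realizable target specification by honesty, Lemma~\ref{lem:progress-honesty}(b), applied to the probe sequence — or more directly, since $\target \subseteq \consistent(S^\ast)$ and $\target$ is realizable), Lemma~\ref{lem:wqo-tractable} applies and gives convergence from $S^\ast$. Composing the two phases yields a learner that converges from $\sbot$ for every teacher, which is exactly the claim.

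One technical point to handle carefully is that a single ``learner'' is a fixed function $\learner : \sspace \to \hypspace$, so the two-phase behavior must be encoded as such a function rather than as a stateful procedure; this works because the accumulated sample itself records how far along the probe sequence we are — e.g., $\learner(S)$ proposes the first $H_i$ with $\concrete(H_i) \in \consistent(S)$ if one exists (this covers $S = \sbot$ and all intermediate probe samples), and otherwise returns a $\wqo$-maximal consistent hypothesis. One must check that the iterated sample sequence induced by this $\learner$ and an arbitrary teacher really does walk through (a subset of) the probes and then stabilizes into the Lemma~\ref{lem:wqo-tractable} regime; this is routine given the progress property, which guarantees each probe is eliminated once proposed and never reproposed.

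I expect the main obstacle to be purely expository: making precise that the honest teacher's answers to the probe queries $H_1,\ldots,H_n$ indeed drive the sample into the wqo-tractable region in a bounded number of steps, and that the definition of wqo-tractable ALF was stated exactly so that this works for \emph{every} teacher (the teacher has no freedom to avoid eliminating the $H_i$, since progress forces $\concrete(H_i) \notin \consistent(\teacher(H_i))$). There is no real mathematical difficulty beyond correctly threading the monotonicity of $\consistent$ (Remark~\ref{rem:consistency-monotonic}) through the join of the probe samples; the heavy lifting has already been done in Lemma~\ref{lem:wqo-tractable}.
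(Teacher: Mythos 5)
Your proposal is correct and follows essentially the same route as the paper's own proof: first propose the hypotheses $H_1,\ldots,H_n$ (as long as some $H_i$ remains consistent with the current sample) so that the teacher's progress property forces the accumulated sample into the wqo-tractable region, then switch to the maximal-hypothesis strategy of Lemma~\ref{lem:wqo-tractable}. The additional care you take in encoding the two phases as a single function $\learner:\sspace\to\hypspace$ is a correct elaboration of what the paper leaves implicit.
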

\begin{proof}
A convergent learner can be built as follows. Let $\{H_1, \ldots,
H_n\}$ be the finite set of hypotheses from the definition of
wqo-tractability of an ALF. 
\begin{itemize}
\item As long as the current sample is consistent with some $H_i$,
  propose such an $H_i$.
%  \todo{MP: Troubled as to how the learner will know whether the sample
%  is consistent with the hypotheses; maybe better strategy is to propose all the Hs;
%   this won't lead to a consistent learner, but that's okay.}

\item Otherwise, the current sample $S$ is such that
  $\consistent_\hypspace(S)$ is wqo-tractable, and thus the learner
  can apply the strategy from Lemma~\ref{lem:wqo-tractable}. \qed
\end{itemize}
\end{proof}

%Note that Theorem~\ref{the:wqo-learner} only requires $\target$ to be
%realizable. The property of being wqo-tractable is a property of $\A$,
%independent of $\target$. 
%This means that the learning strategy from
%the proof of Theorem~\ref{the:wqo-learner} is convergent for every
%target specification. 
%\todo{Madhu: why are we refering to the proof of the theorem, which
%does not even occur in the paper?}

\paragraph{A convergent learner for  conjunctive linear inequality
constraints.}  We have illustrated wqo-tractability for intervals in
Example~\ref{ex:wqo-tractable}. We finish this section by showing that
this generalizes to higher dimensions, that is, to the domain of
$n$-dimensional hyperrectangles in $(\mathbb{Z} \cup
\{-\infty,\infty\})^n$, which form the hypothesis space in this
example. Each such hyperrectangle is a product of intervals over
$(\mathbb{Z} \cup \{-\infty,\infty\})^n$.
Note that hyperrectangles can, e.g., be used to model
conjunctive linear inequality constraints over a set $f_1, \ldots, f_n:
\mathbb{Z}^d \rightarrow \mathbb{Z}$ of numerical
attributes.
%% More precisely, assume that the concept space consists of
%% sets over $\mathbb{Z}^d$, and the target specification is described in
%% terms of functions $f_1, \ldots, f_n:
%% \mathbb{Z}^d \rightarrow \mathbb{Z}$.
% Madhu: add this back to the full version
%A hyperrectangle over $(\mathbb{Z} \cup \{-\infty,\infty\})^n$
%captures a conjunction of constraints of the form $f_i \le m$ or $f_i
%\ge m$, where $m \in \mathbb{Z}$. The concept space would consist
%of subsets of $\mathbb{Z}^d$, where the concretization function maps a
%conjunction of constraints to the set of all points in $\mathbb{Z}^d$
%that satisfy all the constraints from the conjunction ($+/- \infty$ is
%used in dimension $i$ if $f_i$ is not constrained from above or
%below). Therefore, building convergent learners for hyperrectangles
%yields convergent learners for an interesting class of numerical
%properties.

The sample space depends on the type of target specification that we
are interested in. We consider here the typical sample space of
positive and negative samples (however, the reasoning below also works
for other sample spaces, e.g., ICE sample spaces that additionally
include implications). So, samples are of the form $S = (P,N)$,
where $P,N$ are sets of points in $\mathbb{Z}^n$ interpreted as
positive and negative examples (as for intervals, see
Example~\ref{ex:complexity-ordering}).

The following lemma provides the ingredients for building a convergent
learner based on wqo-tractability.
\begin{lemma}\label{lem:wqo-hyperrectangle}
\begin{enumerate}
\item For each realizable sample $S = (P,N)$, there are maximal
hyperrectangles that are consistent with $S$ (possibly more than one).
\item For each $p \in \mathbb{Z}^n$, the set $\mathcal{R}_p$ of
hyperrectangles containing $p$ is well-quasi-ordered by inclusion.
\end{enumerate}
\end{lemma}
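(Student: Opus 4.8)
The plan is to treat the two items separately. Item (b) is a routine coordinatewise reduction to a product of well-orders, while item (a) requires a compactness-type argument. I would present (b) first.

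\medskip
\noindent\textbf{Item (b).} Inclusion on $\mathcal{R}_p$ decomposes over the $n$ coordinates. Writing $p = (p_1,\dots,p_n)$, a hyperrectangle containing $p$ is a product $\prod_{j=1}^{n}[\ell_j,r_j]$ with $\ell_j \le p_j \le r_j$ (so $\ell_j \in \mathbb{Z}\cup\{-\infty\}$ and $r_j \in \mathbb{Z}\cup\{\infty\}$), and $\prod_j[\ell_j,r_j] \subseteq \prod_j[\ell'_j,r'_j]$ holds iff $\ell_j \ge \ell'_j$ and $r_j \le r'_j$ for every $j$ (all these intervals are nonempty, so projecting to coordinate $j$ justifies the ``only if'' direction). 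Hence the map sending $\prod_j[\ell_j,r_j]$ to $(p_1-\ell_1,\dots,p_n-\ell_n,\ r_1-p_1,\dots,r_n-p_n)$, with conventions $p_j-(-\infty)=\infty=\infty-p_j$, is an order isomorphism from $(\mathcal{R}_p,\subseteq)$ onto $(\nat\cup\{\infty\})^{2n}$ with the componentwise order. Since $\nat\cup\{\infty\}$ is a well-order and a finite product of well-orders is a well-quasi-order (the same ingredient as in Example~\ref{ex:wqo-tractable}, now with $2n$ factors instead of two; cf.\ \cite[Theorem~2.3]{Higman52}), $(\mathcal{R}_p,\subseteq)$ is a well-quasi-order.

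\medskip
\noindent\textbf{Item (a).} Here the obstacle is that the family of hyperrectangles consistent with $S=(P,N)$ usually contains infinite strictly ascending $\subseteq$-chains — already for intervals, $[0,0]\subsetneq[0,1]\subsetneq\cdots$ when $N$ does not bound the right side — so a maximal element cannot be produced from well-foundedness of $\supseteq$. Instead I would apply Zorn's lemma to the poset of hyperrectangles consistent with $S$ ordered by $\subseteq$, which is nonempty because $S$ is realizable. The step to verify is that every chain $(R_i)_i$ in this poset has an upper bound inside it: nestedness forces, in each coordinate $j$, the left endpoints to be non-increasing and the right endpoints non-decreasing along the chain, so $\ell^\infty_j:=\inf_i \ell^{(i)}_j$ lies in $\mathbb{Z}\cup\{-\infty\}$ and $r^\infty_j:=\sup_i r^{(i)}_j$ in $\mathbb{Z}\cup\{\infty\}$, and one checks $\bigcup_i R_i = \prod_j[\ell^\infty_j,r^\infty_j]$ — here the chain property together with finiteness of $n$ is exactly what lets one exhibit, for any candidate point, a single $R_i$ containing it. This union is again a hyperrectangle, it contains $P$ (as $R_0$ does), and it is disjoint from $N$ (a point of $N$ in the union would lie in some $R_i$, contradicting consistency of $R_i$); hence it is a consistent hyperrectangle above all $R_i$. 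Zorn's lemma then yields a maximal consistent hyperrectangle.

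\medskip
I expect item (a) — specifically, checking that the union of a chain of consistent hyperrectangles is again a consistent hyperrectangle, so that Zorn applies — to be the main work; item (b) is essentially a relabelling. As a Zorn-free alternative for (a), which also accounts for the ``possibly more than one'': since $N$ is finite, for any consistent $R$ one may choose, for each $x\in N$, a coordinate $j$ and a side (left or right) witnessing $x_j<\ell_j$ resp.\ $x_j>r_j$; such a ``separation pattern'' ranges over a finite set, each satisfiable pattern determines a unique $\subseteq$-largest consistent hyperrectangle realizing it, and every consistent hyperrectangle is contained in the one for its own pattern — so the maximal consistent hyperrectangles are precisely the $\subseteq$-maximal members of this finite, nonempty family.
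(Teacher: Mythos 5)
Your proof is correct, and item (b) is in substance the paper's own argument: the paper writes $\mathcal{R}_p = \mathcal{I}_{p_1}\times\cdots\times\mathcal{I}_{p_n}$ and takes an $n$-fold product of the interval wqos from Example~\ref{ex:wqo-tractable}, whereas you flatten the same decomposition into a $2n$-fold product of copies of $\nat\cup\{\infty\}$; both rest on closure of wqos under finite products \cite{Higman52}. For item (a) the key computation is shared --- the paper, like you, first verifies that the union of an increasing chain of consistent hyperrectangles is again a consistent hyperrectangle, computed coordinatewise via $\inf$ and $\sup$ --- but the two arguments then diverge in how they extract a maximal element. You invoke Zorn's lemma. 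The paper instead stays Zorn-free: it observes that the union of a \emph{strictly} increasing chain must acquire at least one new infinite endpoint (some endpoint changes infinitely often, hence diverges), and since a hyperrectangle has only $2n$ endpoints this ``limit step'' can be iterated at most $2n$ times before reaching the full space, which is trivially maximal. Your approach is shorter and more standard; the paper's buys an explicit, choice-free bound on how far one must climb. Your second, pattern-based argument for (a) is genuinely different from both and is the most informative of the three --- it exhibits the maximal consistent hyperrectangles as the maximal members of an explicit finite family, which also substantiates the parenthetical ``possibly more than one'' --- but note that it needs $N$ to be finite, which holds for samples arising in the iterative learning process (finite joins of teacher responses starting from $\sbot$) but is not literally imposed by the sample space as defined; your Zorn argument and the paper's argument need no such restriction.
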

\begin{proof}
For the first claim, note that for each increasing chain $R_0
\subseteq R_1 \subseteq \cdots$ of hyperrectangles that are all
consistent with $S$, the union $R := \bigcup_{i \ge 0} R_i$ is also a
hyperrectangle that is consistent with $S$. More precisely, if $R_i =
[l_1^i,r_1^i] \times \cdots \times [l_n^i,r_n^i]$, then $R = [l_1,r_1]
\times \cdots \times [l_n,r_n]$ with $l_j = \inf\{l_j^i \mid i \ge 0\}$
and $r_j = \sup\{r_j^i \mid i \ge 0\}$.

Furthermore, if the chain is strictly increasing, then there exists
$j$ such that $l_j = -\infty$ and all $l_j^i \not= -\infty$, or $r_j =
\infty$ and all $r_j^i \not= \infty$. Hence, if $R$ itself can be
extended again into an infinite strictly increasing chain, then the
union of this chain will contain an additional $\infty$ of
$-\infty$. This can happen at most $2n$ times before reaching the
hyperrectangle containing all points, which is certainly
maximal. Thus, there has to be a maximal hyperrectangle consistent
with $S$. 

%% For the first claim, we show how to construct such a maximal
%% hyperrectangle. Let $S = (P,N)$ be a realizable sample, and let $R =
%% [l_1,r_1] \times \cdots \times [l_k,r_k]$ be a hyperrectangle
%% consistent with $S$ (for example, the smallest hyperrectangle
%% containing all the points in $P$). Now we successively extend $R$ in
%% each dimension as follows. Assume that we have already defined this
%% extension as $[l_1',r_1'], \ldots, [l_{i-1}',r_{i-1}']$ for the first
%% $i-1$ dimensions. Then choose $[l_i',r_i'] \supseteq [l_i,r_i]$
%% maximal such that $[l_1',r_1'] \times \cdots \times
%% [l_i',r_i'] \times [l_{i+1},r_{i+1}] \times \cdots \times [l_k,r_k]$
%% is consistent with $S$. Note that $[l_i',r_i']$ is uniquely defined. 
%% The result is a hyperrectangle $R' = [l_1',r_1'] \times \cdots \times
%% [l_k',r_k']$ that is consistent with $S$ and by definition there is no
%% strictly bigger hyperrectangle that is consistent with $S$ (one can
%% generated different such maximal hyperrectangles by changing the order
%% in which  the dimensions are extended).

We now prove the second claim. For a point $p = (p_1, \ldots, p_n)$,
the set $\mathcal{R}_p$ is the product $\mathcal{R}_p =
\mathcal{I}_{p_1} \times \cdots \times \mathcal{I}_{p_n}$ of the sets
of intervals containing the points $p_i$.  Furthermore, the inclusion
order for hyperrectangles is the $n$-fold product of the inclusion
order for intervals. Thus, the inclusion order on $\mathcal{R}_p$ is a
well-quasi-order because it is a product of well-quasi-orders
\cite{Higman52}. \qed

\end{proof}

%% Such a maximal hyperrectangle can be obtained in a
%% constructive way as follows: Start with the smallest rectangle that
%% contains all points from $P$. This rectangle exists (because $S$ is
%% realizable) and is unique (just pick for each dimension the least and
%% the greatest value from the points in $P$). From this rectangle,
%% successively expand each dimension maximally into both directions, in
%% any order of the dimensions, such that still all points from $N$ are
%% excluded. The result is a maximal hyperrectangle that is consistent
%% with $S$.

We conclude that the following type of learner is convergent: for
the empty sample, propose the empty hyperrectangle; for every
non-empty sample $S$, propose a maximal hyperrectangle consistent with
$S$.

%---------- Examples ----------
% !TeX root = main.tex

%\vspace*{-0.2cm}
\section{Synthesis Problems Modeled as ALFs}\label{sec:examples}
%\vspace*{-0.2cm}
In this section, we list a host of existing synthesis problems and algorithms that can be seen as ALFs. Specifically, we consider examples from the areas of program verification and program synthesis.
We encourage the reader to look up the referenced algorithms to better understand their mapping into our framework.
Moreover, we have new techniques based on ALFs to compute fixed-points in the setting of abstract interpretation using learning.

\subsection{Program Verification} \label{subsec:program_verification}
While program verification itself does not directly relate to
synthesis, most program verification techniques require some form of
help from the programmer before the analysis can be
automated. Consequently, synthesizing objects that replace manual help
has been an area of active research. We here focus on \emph{learning
  loop invariants}. Given adequate
invariants (in terms of pre/post-conditions, loop invariants, etc.), the rest of the verification process can often be completely
automated~\cite{floyd,hoare} using logical constraint-solvers~\cite{z3,cvc4}.
For the purposes of this article, let us consider while-programs with a single loop. Given a pre- and post-condition, assertions, and contracts for functions called, the problem is to find a loop invariant that proves the post-condition and assertions (assuming the program is correct).

\subsubsection{Invariant synthesis using the ICE learning model}
Given a program with a single loop whose loop invariant we want to synthesize, there are \emph{many} inductive invariants that prove
the assertions in the program correct--- these invariants are characterized by the following three properties:
\begin{enumerate*}
	\item that it include the states when the loop is entered the first time,
	\item that it exclude the states that immediately exit the loop and reach the end of the program and not satisfy the post-condition, and
	\item that it is inductive (i.e., from any state satisfying the invariant, if we execute the loop body once, the resulting state is also in the invariant).
\end{enumerate*}
The teacher knows these properties, and must reply to conjectured hypotheses of the learner using these properties. Violation of properties (a) and (b) are usually easy to check using a constraint solver, and will result in a \emph{positive} and \emph{negative} concrete configuration as a sample, respectively. However, when inductiveness fails, the obvious counterexample is a \emph{pair} of configurations, $(x,y)$, where $x$ is in the hypothesis but $y$ is not, and where the program state $x$ evolves to the state $y$ across one execution of the loop body.

The work by Garg et.~al.~\cite{DBLP:conf/cav/0001LMN14} hence proposes what they call the \emph{ICE model} (for implication counterexamples), where the learner learns from positive, negative, and implication counterexamples. The author's claim is that without implication counterexamples, the teacher is stuck when presented a hypothesis that satisfies the properties of being an invariant save the inductiveness property.

%The ICE learning frameworks~\cite{DBLP:conf/cav/0001LMN14} can be modeled as ALFs. 
%As we said, the concept space $C$ is the set of all program configurations,
%$H$ the language used to describe the invariant. And the sample space $S$ would be the set of triples of the form $(P,N,I)$, where
%$P$ and $N$ are the positively and negatively labeled configurations, and $I$ consists of a set of pairs of configurations, modeling
%the implication counterexamples. The sample semi-join lattice would be $(S, \sqsubseteq, \lub, (\emptyset, \emptyset,\emptyset))$, 
%where $(P, N, I) \sqsubseteq (P', N', I')$ iff $P \subseteq P'$, $N \subseteq N'$ and $I \subseteq I'$. 
%\todo[inline, color=blue!10]{$@$Madhu: Is there something missing?}

From the described components we build an ALF $\A_{\mathrm{ICE}} =
(\cspace, \hypspace, \concrete, \sspace, \consistent)$, where 
$\cspace$ is the set of all subsets of program configurations, the hypothesis space $\hypspace$ is the language used to describe the invariant, and the sample space is defined as follows:
\begin{itemize}
\item A sample is of the form $S = (P, N, I)$, where $P,N$ are sets of
  program configurations (interpreted as positive and negative
  examples), and $I$ is a set of pairs of program configurations
  (interpreted as implications).
\item A set $C \in \cspace$ of program configurations is consistent
  with $(P, N, I)$ if $P \subseteq C$, $N \cap C = \emptyset$, and if
  $(c,c') \in I$ and $c \in C$, then also $c' \in C$.
\item The order on samples is defined by component-wise set inclusion (i.e., $(P,N,I) \sord (P',N',I')$ if $P \subseteq P'$, $N \subseteq N'$, and $I \subseteq I'$).
\item The join is the component-wise union, and $\sbot = (\emptyset,\emptyset,\emptyset)$. 
\end{itemize}
Since this sample space contains implications in addition to the standard positive and negative examples, we refer to it as an \emph{ICE sample space}.

%For proving a safety condition given by the sets $\mathit{Init}$ of
%initial program configurations, and $\mathit{Bad}$ of bad program
%configurations, we instantiate $\A_{\mathrm{ICE}}$ with the following target
%specification
%\[
%\target_{\mathit{Inv}} = \{C \in \cspace \mid \mathit{Init} \subseteq
%C \land C \cap \mathit{Bad} = \emptyset \land F(C) \subseteq C\}.
%\]
%
%Note that the target specification is nonempty if the program
%satisfies the safety condition (because the set of configurations that
%are reachable from $\mathit{Init}$ satisfies the target
%specification). More precisely, $\target_{\mathit{Inv}}$ contains all
%inductive invariants with respect to $\mathit{Init}$, $\mathit{Bad}$,
%and $F$.
%
We can now show that there is a teacher for these ALF instances, because
a teacher can refute any hypothesis made by a learner with a positive, negative,
or implication counterexample, depending on which property of invariants is violated.

\begin{proposition} \label{pro:ice-teacher}
There is a teacher for ALF instances of the form $(\A_{\mathrm{ICE}},
\target_{\mathit{Inv}})$.
\end{proposition}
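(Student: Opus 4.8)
The plan is to exhibit an explicit teacher $\teacher$ for the ALF instance $(\A_{\mathrm{ICE}}, \target_{\mathit{Inv}})$ and verify that it satisfies the two defining properties of a teacher, namely progress and honesty (Definition~\ref{def:teacher}). Recall that $\target_{\mathit{Inv}}$ is the set of concepts $C \in \cspace$ (sets of program configurations) that are genuine inductive invariants, i.e., that satisfy the three properties (a)--(c) described above: $C$ contains all loop-entry states, $C$ excludes all bad loop-exit states, and $C$ is closed under the loop body. Given a hypothesis $H$ with $\concrete(H) = C$, the teacher examines these three properties in some fixed order. If all three hold, then $C \in \target_{\mathit{Inv}}$ and the teacher returns $\sbot = (\emptyset,\emptyset,\emptyset)$. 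Otherwise, the teacher picks the first violated property and returns the corresponding counterexample sample: a positive example $(\{c\},\emptyset,\emptyset)$ if some loop-entry state $c \notin C$; a negative example $(\emptyset,\{c\},\emptyset)$ if some bad exit state $c \in C$; and an implication $(\emptyset,\emptyset,\{(c,c')\})$ if inductiveness fails, i.e., $c \in C$, $c$ evolves to $c'$ under the loop body, but $c' \notin C$.

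For \textbf{progress}, I must check that $\teacher(H) = \sbot$ for $H \in \target_{\mathit{Inv}}$ (immediate from the construction) and that $\concrete(H) \notin \consistent(\teacher(H))$ for $H \notin \target_{\mathit{Inv}}$. The latter is a short case analysis matching the definition of consistency for the ICE sample space: if the teacher returned $(\{c\},\emptyset,\emptyset)$ with $c \notin C$, then $C$ is not consistent because $P \not\subseteq C$; if it returned $(\emptyset,\{c\},\emptyset)$ with $c \in C$, then $C$ is inconsistent because $N \cap C \neq \emptyset$; if it returned $(\emptyset,\emptyset,\{(c,c')\})$ with $c \in C$, $c' \notin C$, then $C$ violates the implication closure condition. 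In each case $\concrete(H) = C \notin \consistent(\teacher(H))$, as required.

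For \textbf{honesty}, I must show $\target_{\mathit{Inv}} \subseteq \consistent(\teacher(H))$ for every $H \in \hypspace$. If $\teacher(H) = \sbot$ this is trivial since $\consistent(\sbot) = \cspace$. Otherwise the returned sample records a single true fact about the program semantics, and every genuine inductive invariant $C^\star \in \target_{\mathit{Inv}}$ must respect that fact: a loop-entry state lies in every invariant (property (a)), so $C^\star$ is consistent with a positive example of that form; a bad exit state lies in no invariant (property (b)), so $C^\star$ is consistent with the negative example; and if $c$ evolves to $c'$ under the loop body, then inductiveness of $C^\star$ (property (c)) forces $c \in C^\star \Rightarrow c' \in C^\star$, so $C^\star$ is consistent with the implication. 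Hence $C^\star \in \consistent(\teacher(H))$ in all cases.

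The only genuine subtlety --- and the step I expect to require the most care --- is ensuring that whenever $H \notin \target_{\mathit{Inv}}$ the teacher can actually \emph{produce} a witnessing counterexample of one of the three permitted forms; this is exactly the point emphasized in the discussion preceding the proposition, that implication counterexamples are indispensable precisely when the hypothesis fails only inductiveness. Concretely, one argues that if $C$ fails property (a) there is a loop-entry configuration outside $C$; if it fails (b) there is a reachable-and-bad configuration inside $C$; and if it fails (c), by definition of inductiveness there exist $c \in C$ and a loop-body successor $c' \notin C$. Each of these is a semantic fact about the program that a verification oracle (here abstracted away, but realizable via a constraint solver as noted in the surrounding text) can extract. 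Everything else is a routine unfolding of the definitions of the ICE sample space and of $\target_{\mathit{Inv}}$.
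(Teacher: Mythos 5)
Your construction is correct and is exactly the argument the paper intends: the paper states this proposition without a formal proof, justifying it only by the preceding remark that the teacher can refute any non-invariant hypothesis with a positive, negative, or implication counterexample according to which of the three defining properties fails. Your proposal is a faithful and complete elaboration of that sketch, with the progress and honesty checks carried out correctly.
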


%\subsection*{Positive and Negative Examples are not Sufficient}

Furthermore, we can show that having only positive and negative samples precludes the existence of teachers. In fact, we can show that if $\cspace = 2^D$ (for a domain $D$) and the sample space $\sspace$ consists of only positive and negative examples in $D$, then a target set $\target$ has a teacher only if it is defined in terms of excluding a set $B$ and including a set $G$.

\begin{lemma}
Let $C=H=2^D$, $\gamma = id$, $S = \{(P, N) \mid P, N \subseteq D \}$, and
$\kappa((P,N)) = \{ R \subseteq D \mid P \subseteq R \wedge R \cap N = \emptyset \}$.

Let $\target \subseteq C$ be a target. If there exists a teacher for $\target$,
then there must exists sets $B,G \subseteq D$ such that
$\target = \{ R \subseteq D \mid B \cap R = \emptyset \text{ and } G \subseteq R\}$.
\end{lemma}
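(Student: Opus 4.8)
The plan is to show that if a teacher $\teacher$ exists for the target $\target$ (in the concrete ALF where $\cspace = \hypspace = 2^D$, $\concrete = id$, and samples are pairs $(P,N)$ with $\consistent((P,N)) = \{R \mid P \subseteq R, R \cap N = \emptyset\}$), then $\target$ has the claimed shape. The natural candidates for the sets $B$ and $G$ are forced by the teacher's honesty on the extreme hypotheses: define $G := \teacher(\emptyset).P$ (the positive points the teacher must hand back when shown the empty set) and $B := \teacher(D).N$ (the negative points the teacher hands back when shown the full domain $D$). I would then prove the two inclusions $\target \subseteq \{R \mid B \cap R = \emptyset,\ G \subseteq R\}$ and the reverse separately.

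\textbf{The forward inclusion.} Let $R \in \target$. By honesty, $R \in \consistent(\teacher(\emptyset))$, i.e.\ $R \in \consistent((G, N'))$ for whatever negative component $N'$ the teacher returns on input $\emptyset$; in particular $G \subseteq R$. Symmetrically, $R \in \consistent(\teacher(D))$, so writing $\teacher(D) = (P', B)$ we get $R \cap B = \emptyset$. Hence every $R \in \target$ lies in $\{R \mid B \cap R = \emptyset,\ G \subseteq R\}$. This direction is essentially immediate from honesty and needs no case analysis.

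\textbf{The reverse inclusion — the main obstacle.} Here I must show that every $R$ with $G \subseteq R$ and $R \cap B = \emptyset$ actually lies in $\target$. Suppose not: some such $R \notin \target$. Apply the \emph{progress} property of the teacher to $R$: since $R = \concrete(R) \notin \target$, we have $R \notin \consistent(\teacher(R))$. Write $\teacher(R) = (P_R, N_R)$; then $R \notin \consistent((P_R,N_R))$ means either $P_R \not\subseteq R$ or $R \cap N_R \neq \emptyset$. The key step is to combine this with honesty applied to the \emph{same} output of $\teacher$ on a cleverly chosen pair of hypotheses, or — more robustly — to first establish two auxiliary facts: (i) $G \in \target$ and (ii) $D \setminus B \in \target$, or at least that enough of $\target$ is populated to derive a contradiction. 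Concretely, I would argue: the hypothesis $R$ is consistent with $\teacher(\emptyset) = (G, N_0)$ — because $G \subseteq R$ holds by assumption and I must also check $R \cap N_0 = \emptyset$, which follows if I can show $N_0 \subseteq B$; and $R$ is consistent with $\teacher(D) = (P_1, B)$ — because $R \cap B = \emptyset$ by assumption and $P_1 \subseteq R$ follows if $P_1 \subseteq G$. So the crux is showing the negative points the teacher ever reveals on input $\emptyset$ are contained in $B$, and the positive points revealed on input $D$ are contained in $G$; these should follow from honesty together with the fact that $\target \neq \emptyset$ forces $G \subseteq T$ and $T \cap B = \emptyset$ for every target $T$, so any point outside $G$ that some target excludes, or inside $B$ that some target excludes, must be consistent with that target — contradicting $N_0 \not\subseteq B$ unless $\target$ is degenerate. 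Once $R$ is pinned as consistent with the relevant teacher outputs, progress forces a contradiction: the teacher cannot both refute $R$ (progress) and keep $R$ consistent. The delicate part, and where I expect to spend the most effort, is handling the case where $\target = \emptyset$ or where the teacher is allowed to return $\sbot$ on non-target hypotheses only when they are targets — i.e.\ carefully using that $\target$ realizable (nonempty) is needed, and possibly that the lemma implicitly assumes $\target \neq \emptyset$; if $\target = \emptyset$ one can take $G = D$, $B = D$ so the RHS is also empty and the statement holds vacuously. I would finish by assembling these pieces into a clean contradiction argument for $R \notin \target$.
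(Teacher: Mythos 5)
There is a genuine gap, and it lies exactly where you predicted: the reverse inclusion. The root cause is your choice of $G$ and $B$. Defining $G$ as the positive part of $\teacher(\emptyset)$ and $B$ as the negative part of $\teacher(D)$ is too weak: the teacher is only obliged to return \emph{some} honest sample refuting each hypothesis, so these two answers need not reveal all of the forced positive and negative points. Concretely, take $D=\{1,2,4\}$ and $\target=\{R\mid \{1,4\}\subseteq R,\ R\cap\{2\}=\emptyset\}=\{\{1,4\}\}$. A perfectly valid teacher may answer $\teacher(\emptyset)=(\{1\},\emptyset)$ (honest, and it refutes $\emptyset$) and $\teacher(D)=(\emptyset,\{2\})$, while revealing the point $4$ only when asked about, say, the hypothesis $\{1\}$. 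Your $G=\{1\}$, $B=\{2\}$ then give $\{R\mid G\subseteq R,\ R\cap B=\emptyset\}=\{\{1\},\{1,4\}\}\neq\target$, so no amount of further argument can rescue the reverse inclusion for that choice. Moreover, the contradiction you plan to extract at the end does not exist: progress constrains $\teacher(R)$, not $\teacher(\emptyset)$ or $\teacher(D)$, so showing that $R$ is consistent with the latter two samples is perfectly compatible with the teacher refuting $R$ via a third sample $\teacher(R)$ containing a positive point outside your $G$.

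The repair is to define $G$ and $B$ globally, as the paper does: let $G$ be the union of all positive examples and $B$ the union of all negative examples that the teacher returns over \emph{all} hypotheses. Your forward inclusion goes through verbatim (honesty applied to every $\teacher(H)$). The reverse inclusion then becomes a one-liner rather than the "main obstacle": if $G\subseteq R$, $R\cap B=\emptyset$, and $R\notin\target$, write $\teacher(R)=(P_R,N_R)$; by construction $P_R\subseteq G\subseteq R$ and $R\cap N_R\subseteq R\cap B=\emptyset$, so $R\in\consistent(\teacher(R))$, contradicting progress. (For what it is worth, the paper's own write-up only spells out the forward direction, but with the global definition of $G$ and $B$ the reverse direction is exactly this short argument; with your local definition it is false.)
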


\begin{proof}
 Assume that there exists a teacher for the target set $\target$, and let $G$ and $B$ be the union of all positive examples and the union of all negative examples, respectively, that the teacher returns. Now, we claim that $\target = \{ R \subseteq D \mid B \cap R = \emptyset \text{ and } R \subseteq G\}$. Towards a contradiction, assume that this is not the case. Then, there exists an $R \in \target$ such that $B \cap R \not = \emptyset$ or $G \not \subseteq R$. If $B \cap R \not = \emptyset$, then there is some $b \in R$ that was returned as a negative counterexample for some hypothesis. Since $R \in \target$ is not consistent with this negative example $b$, this contradicts the requirement that the teacher is honest. Similarly, if $G \not \subseteq R$, then there is some $g \in G$ that was returned as a positive counterexample, which contradicts the teacher's honesty.
\end{proof}

The above proves that positive and negative samples are not sufficient for learning invariants,
as invariants cannot be defined as all sets that exclude a set of states and include a set of states.\par\vskip\baselineskip

There are several ICE-based invariant synthesis formalisms that we can capture. First, Garg~et.~al.~\cite{DBLP:conf/cav/0001LMN14} have considered arithmetic invariants over a set of integer variables  $x_1, \ldots, x_\ell$ of the form $\bigvee_{i=1}^n \bigwedge_{j=1}^{m_i} \bigl( \sum_{k=1}^\ell a_k^{i,j} x_k \leq c^{i,j} \bigr)$, $a_k^{i, j} \in \{{-1}, 0, 1\}$, where the learner is implemented using a constraint solver that finds smallest invariants that fit the sample. This is accurately modeled in our framework, as in the ICE formulation above, with the hypotheses space being the set of all formulas of this form. The fact that Garg~et.~al.'s learner produces smallest invariants makes it an Occam learner in the sense of Section~\ref{sec:occam_learner} and, hence, it converges in finite time.
The approach proposed by Sharma and Aiken~\cite{DBLP:conf/cav/0001A14}, {\sc C2I}, is also an ICE-learner, except that the learner uses \emph{stochastic search} based on a Metropolis Hastings MCMC (Markov chain Monte Carlo) algorithm, which again can be seen as an ALF.

We can also see the work by Garg~et.~al.~\cite{CAVQDA} on synthesizing quantified invariants for linear data structures such as lists and arrays as ALFs. This framework can infer quantified invariants of the form 
\[ \forall y_1, y_2 \colon (y_1 \leq y_2 \leq i) \Rightarrow a[y_1] \leq a[y_2]. \]
However, Garg~et.~al.\ do not represent sets of configurations by means of logical formulas (as shown above) but use an automata-theoretic approach, where a special class of automata, called \emph{quantified data automata (QDAs)}, represent such logical invariants; hence, these QDAs form the hypothesis space in the ALF. The sample space there is also unusual: a sample (modeling a program configuration consisting of arrays or lists) is a \emph{set of valuation words}, where each such word encodes the information about the array (or list) for specially quantified pointer variables pointing into the heap, and where data-formulas state conditions of the keys stored at these locations. %Hence, the sample space in this ALF is an ICE-sample, where each counterexample is a set of valuation words. 

ALFs can also capture the ICE-framework described by Neider~\cite{daniel_phd}, where invariants are learned in the context of \emph{regular model-checking}~\cite{DBLP:conf/cav/BouajjaniJNT00}. In regular model checking, program configurations are captured using (finite---but unbounded---or even infinite) words, and sets of configurations are captured using finite automata. Consequently, the hypothesis space is the set of all DFAs (over an a~priori chosen, fixed alphabet), and the sample space is an ICE-sample consisting of configurations modeled as words. The learner proposed by Neider constructs consistent DFAs of minimal size and, hence, is an Occam learner that converges in finite time (cf.\ Section~\ref{sec:occam_learner}).
\par\vskip\baselineskip

We now turn to two other invariant-generation frameworks that skirt the ICE model.

\subsubsection{Houdini}
The Houdini algorithm~\cite{houdini} is a learning algorithm for synthesizing invariants that avoids learning from ICE samples.
Given a finite set of \emph{predicates} $P$, Houdini learns an invariant that is expressible as a conjunction of some subset
of predicates (note that the hypothesis space is finite but exponential in the number of predicates). 
Houdini learns an invariant in time \emph{polynomial} in the number of predicates (and in linear number rounds) and
is implemented in the Boogie program verifier~\cite{boogie}. It is widely used 
(for example, used in verifying device drivers~\cite{DBLP:conf/sigsoft/LalQ14,DBLP:conf/cav/LalQL12} and in race-detection in GPU kernels~\cite{DBLP:conf/oopsla/BettsCDQT12}.

The setup here can be modeled as an ALF: we take the concept space $\cspace$ to be all subsets of program configurations, and the hypothesis space $\hypspace$ to be the set of all conjunctions of subsets of predicates in $P$, with the map $\gamma$ mapping each conjunctive formula in $\hypspace$ to the set of all configurations that satisfy the predicates mentioned in the conjunction. We take the sample space to be the ICE sample space, where each sample is a valuation $v$ over $P$ (indicating which predicates are satisfied) and where implication counterexamples are pairs of valuations. 

The Houdini learning algorithm itself is the classical conjunctive learning algorithm for positive and negative samples (see \cite{KearnsV94}), but its mechanics are such that it works for ICE samples as well. More precisely, the Houdini algorithm always creates the \emph{semantically smallest} formula that satisfies the sample (it hence starts with a conjunction of all predicates, and in each round ``knocks off'' predicates that are violated by positive samples returned). Since Houdini always returns the semantically-smallest conjunction of predicates, it will never receive a negative counterexample
(assuming the program is correct and has a conjunctive invariant over $P$). Furthermore, for an implication counterexample $(v,v')$, the algorithm knows that since it proposed the semantically smallest conjunction of predicates, $v$ cannot be made negative; hence it treats $v'$ as a positive counterexample. Houdini converges since the hypothesis space is finite (matching the first recipe for convergence we outlined in Section~\ref{sec:finite_hypothesis_space}); in fact, it converges in linear number of rounds since in each round at least one predicate is removed from the hypothesized invariant.

\subsubsection{Learning invariants in regular model-checking using witnesses}
The learning-to-verify project reported in~\cite{DBLP:conf/icfem/VardhanSVA04,DBLP:conf/fsttcs/VardhanSVA04,DBLP:conf/tacas/VardhanSVA05,DBLP:conf/kbse/VardhanV05} leverages machine learning to the verification of infinite state systems that result from processes with FIFO queues, 
but skirts the ICE model using a different idea. 
The key idea is to view the the identification of the reachable states of such a system as a machine learning problem instead of computing this set iteratively (which, in general, requires techniques such as acceleration or widening). 
In particular, we consider the work of Vardhan~et.~al.~\cite{DBLP:conf/icfem/VardhanSVA04} and show that this is an instantiation of our abstract learning framework. The key idea in this work is to represent configurations as traces through the system and to add a notion of \emph{witness} to this description, resulting in so-called annotated traces. The teacher, when receiving a set of annotated traces, can actually check whether the configurations are reachable based on the witnesses (a witness can be, say, the length of the execution or the execution itself) and, consequently, the model allows learning from such traces directly. %, as the teacher can check if any configuration annotated with a witness is reachable.
Indeed, this approach can be modeled by an ALF: the concept space consists of subsets of configurations, the target space consists of the set of reachable configurations, the hypothesis space consists of automata over annotated traces, and the sample space consists of positive and negatively labeled annotated traces.

\subsection{Synthesis of Fixpoints in Abstract Domains}\label{subsec:abstract}
% !TeX root = main.tex

In Section~\ref{subsec:program_verification} we have explained how several
instances of ICE learners fit into our framework. We now provide a
generic technique to model the problem of fixpoint computation in the
setting of abstract interpretation using learning.

\newcommand{\D}{\mathcal{D}}
\newcommand{\Dhat}{\widehat{\mathcal{D}}}

We assume the setting of abstract interpretation~\cite{cc77} with 
\begin{itemize}
\item a concrete domain $(\D, \subseteq, \bot, \cup)$ and an abstract
  domain $(\Dhat, \sqsubseteq, \bot, \sqcup)$, which both have a join
  lattice structure, and
\item a Galois connection between these two, given by two monotone
  functions $\gamma: \Dhat \rightarrow \D$ (the concretization
  function), and $\alpha: \D \rightarrow \Dhat$ (the abstraction
  function), with $\alpha(X) \sqsubseteq \widehat{X} \Leftrightarrow X
  \subseteq \gamma(\widehat{X})$ for all $X \in \D$ and $\widehat{X} \in
  \Dhat$.

\end{itemize}
The concrete domain usually describes the semantics of a program and
comes with an increasing transformer $F: \D \rightarrow \D$ that
captures the behaviour of the program (increasing means that
$X \subseteq X$ for each $X$). The abstract domain is used to model
the aspects of the program that one is interested in.

A specification is given in terms a set $B \subseteq \D$ of bad
concrete elements. The goal is to find a fixpoint of $F$ that is not
above any bad element, i.e., an element $X \in \D$ with $F(X) = X$,
and $Y \not\subseteq X$ for each $Y \in B$. We refer to such fixpoints
as \emph{adequate fixpoints} because they show that the program cannot
reach a bad state.

\paragraph{Synthesis of Precise Abstract Transformers}
The general idea of abstract interpretation is to do the fixpoint
computation on the abstract domain instead of the concrete
domain. This is done using an abstract transformer
$\widehat{F}: \Dhat \rightarrow \Dhat$ that overapproximates the concrete
transformer, in the sense that
$\alpha(F(\gamma(\hat{X}))) \sqsubseteq \hat{F}(\hat{X})$ for each
$\hat{X} \in \Dhat$. The best abstract transformer is the one where
equality holds instead of inclusion.

Since manually designing a good abstract transformer can be
difficult, Thakur et al.~\cite{ThakurLLR15} propose an automatic method to
synthesize the abstract post of a given abstract element for the
concrete domain consisting of sets of program configurations $\D
= 2^D$. 

We can model the algorithm proposed by Thakur et al.~\cite{ThakurLLR15} as a 
synthesis using learning in an ALF that uses the concrete domain $\D$
as concept space, the abstract domain $\Dhat$ as hypothesis space
together with the existing concretization function $\gamma$ as the
concretization function for the ALF. Since we
are interested in elements above $\alpha(F(\gamma(\hat{X})))$, the
sample space simply consists of positive examples, that is, a sample
is a finite set of elements of $D$. A hypothesis is consistent with a
sample if it contains all the elements from the sample. 

For an abstract element $\hat{X}$, the target specification is given
by $\target = \{T \subseteq
D \mid \alpha(F(\gamma(\hat{X}))) \sqsubseteq T\}$. 

If the learner proposes some hypothesis $H$ (which is an abstract
element), then the teacher can check whether $\gamma(H)$ is a superset
of $F(\gamma(\widehat{X}))$ and if not, return a positive example from
$F(\gamma(\widehat{X})) \setminus \gamma(H)$. This is precisely what happens
(at an abstract level) in the procedure proposed in \cite{ThakurLLR15}.

One should note here that in case the abstract domain has a maximal
element, a learner could always propose this element because it can be
sure that it is a target element if the target is realizable. The idea
is to come up with a learner that computes a better solution than just
the maximum. In \cite{ThakurLLR15} this is done by starting with the
empty hypothesis, and then for each positive example $S$ returned by
the teacher, applying the abstraction function $\alpha$ on it and taking the join
of the resulting abstract element with the current hypothesis. In this
way, the learner ensures that the hypothesis is always below the set
$\alpha(F(\gamma(\widehat{X})))$. So if the learner converges, then it
computes the best abstract transformer on $\widehat{X}$.

\paragraph{Synthesis of Adequate Fixpoints.}
We now describe a generic way for designing an ALF that models the problem
of finding an adequate fixpoint as a synthesis problem that uses learning
from samples. This can be seen as a generalization of finding loop invariants
in an abstract domain (but without computing the abstract transformers).

%% We encourage the reader to have in mind the description of
%% invariant synthesis in Section~\ref{sec:invariant_synthesis}, which is
%% a specific instance of this abstract setting with the concrete domain
%% consisting of sets of program configurations and the transformer being
%% defined pointwise for single configurations.

The ALF we propose has the concrete domain $\D$ as concept space, the
abstract domain $\Dhat$ as hypothesis space together with the existing
concretization function $\gamma$. The sample space of an ALF is, in
general, designed for a specific class of target specifications (in
order to guarantee the existence of a teacher). In the following, we
describe how to construct such a sample space for adequate fixpoints
as targets. 

The invariants defined in Section~\ref{subsec:program_verification} can be
seen as adequate fixpoints for the  transformer $F$ and the set $B$
being defined pointwise on single configurations (with concrete domain
consisting of sets of program configurations). This pointwise
definition is the reason why the sample space can be build up by using
single configurations as positive and negative examples, and pairs of
configurations as implications.

We replace the pointwise definition by a definition of in terms of a
class $\mathcal{R} \subseteq \D$ of representative concrete
elements. So the definition below is satisfied by the class of
singleton sets in case of the concrete domain consisting of sets of
program configurations. It intuitively states that $\mathcal{R}$ is
rich enough to prove that a hypothesis is not a fixpoint.

%The most general case is obtained by choosing $\mathcal{R} =
%\D$. This, however, will not be useful in applications because usually
%the set of concrete elements is far too complex. So the idea is that
%$\mathcal{R}$ is a ``simple'' subset of concrete elements that can be
%used to describe the specification (like singleton elements if the
%domain consists of sets of program configurations). The following definition
%intuitively states that $\mathcal{R}$ is rich enough to prove that a
%concrete element is not a fixpoint. 
%
%% domain. The set $\mathcal{R}$ needs to satisfy certain properties to
%% ensure that the sample space is expressive enough for the existence of
%% a teacher. For the following definition, we encourage the reader to
%% have in mind the example of sets of program configurations as concrete
%% domain $\D$, and the transformer $F$ as well as the sets $G,B$ being
%% defined pointwise for single configurations. Then the set
%% $\mathcal{R}$ of all sets in $\D$ that contain a single configuration
%% satisfies these properties.

\begin{definition}\label{def:representative-set}
A set $\mathcal{R} \subseteq \D$ is a \emph{representative set} (for
the transformer $F$) if for each $\hat{X} \in \Dhat$ and $X
:= \gamma(\hat{X})$ such that $F(X) \not= X$, there are
$Y,Y' \in \mathcal{R}$ with $Y \subseteq X$, $Y'
\subseteq F(Y)$, and $Y' \not\subseteq X$.\qed
\end{definition}

It is not hard to see that $\mathcal{D}$ is always representative set.
And in fact that the set of all elements of the form $\gamma(H)$ and $F(\gamma(H))$,
where $H \in \hypspace$ also forms a representative set.

We consider target specifications of adequate fixpoints that can be
expressed in terms of $\mathcal{R}$. We say that
$\target \subseteq \D$ is an $\mathcal{R}$-specification of adequate
fixpoints if there is a set $B \subseteq \mathcal{R}$ such that
$\target =\{X \in \D \mid F(X) = X \mbox{ and } Y \not\subseteq
X \mbox{ for all } Y \in B \}$.

%% \begin{itemize}
%% \item If there is $Y \in G$ with $Y \not\subseteq X$, then there is $Z
%%   \in G \cap \mathcal{R}$ with $Z \not\subseteq X$.
%% \item If there is $Y \in B$ with $Y \subseteq X$, then there is $Z \in
%%   \mathcal{R} \cap B$ such that $Z \subseteq X$.
%% \item If $F(X) \not\subseteq X$, then there are $Y,Y' \in \mathcal{R}$
%%   with $Y \subseteq X$, $Y' \subseteq F(Y)$, and $Y' \not\subseteq
%%   X$. 
%% \end{itemize}
%% \end{defintion}

For this class of target specifications, we let the sample space
$\sspace_\mathcal{R}$ consist of ICE samples over $\mathcal{R}$, that
is, of triples $(P,N,I)$ of finite set $P,N \subseteq \mathcal{R}$,
and a finite set $I \subseteq \mathcal{R} \times \mathcal{R}$. A
concept $X \in D$ is consistent with a sample $(P,N,I)$ if
$Y \subseteq X$ for all $Y \in P$, $Y \not\subseteq X$ for all $Y \in
N$, and if $Y \subseteq X$, then $Y' \subseteq X$ for all $(Y,Y') \in
I$.
This consistency relation is denoted by $\consistent_\mathcal{R}$.  In
analogy to Proposition~\ref{pro:ice-teacher} we can show that this
sample space is expressive enough to guarantee the existence of a
teacher.
\begin{proposition}  \label{pro:abstract-ice-teacher}
There is a teacher for ALF instances of the form $(\A,\target)$ with
$\A = (\D, \Dhat, \gamma, \sspace_{\mathcal{R}},
\consistent_{\mathcal{R}})$ and an $\mathcal{R}$-specification of
adequate fixpoints $\target$.
\end{proposition}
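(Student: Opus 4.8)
The plan is to exhibit a concrete teacher $\teacher$ and verify the two defining properties (progress and honesty), mirroring the construction behind Proposition~\ref{pro:ice-teacher} but now over the representative set $\mathcal{R}$ rather than over single configurations. Given a hypothesis $H \in \Dhat$, set $X := \gamma(H)$. The teacher distinguishes two cases. First, if $F(X) = X$ and $Y \not\subseteq X$ for all $Y \in B$, then $X \in \target$, so the teacher returns $\sbot = (\emptyset,\emptyset,\emptyset)$. Otherwise, $H \notin \target$, and we must return a sample in $\sspace_{\mathcal{R}}$ with which $X$ is \emph{not} consistent, but with which every target concept \emph{is} consistent.

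There are two sub-cases when $H \notin \target$. (i) Some $Y \in B$ satisfies $Y \subseteq X$: then the teacher returns the negative example $(\emptyset,\{Y\},\emptyset)$. Since $Y \subseteq X$, the concept $X$ is not consistent with this sample; and since every $T \in \target$ satisfies $Y \not\subseteq T$ for all $Y \in B$ (in particular this $Y$), every target concept is consistent with it. (ii) Otherwise $F(X) \neq X$: here we invoke the defining property of a representative set (Definition~\ref{def:representative-set}) to obtain $Y, Y' \in \mathcal{R}$ with $Y \subseteq X$, $Y' \subseteq F(Y)$, and $Y' \not\subseteq X$. The teacher returns the sample $(\{Y\},\emptyset,\{(Y,Y')\})$. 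The concept $X$ is inconsistent with this sample: it contains $Y$ but, since $Y' \not\subseteq X$, it violates the implication $(Y,Y')$. For honesty we must check that every $T \in \target$ is consistent with $(\{Y\},\emptyset,\{(Y,Y')\})$: the positive part is fine iff $Y \subseteq T$, which need not hold a priori, so the implication is the correct vehicle — if $Y \subseteq T$ then, since $T = F(T)$ and $F$ is monotone, $F(Y) \subseteq F(T) = T$, hence $Y' \subseteq F(Y) \subseteq T$, so the implication is satisfied; if $Y \not\subseteq T$ the implication is vacuously satisfied. Either way $T$ is consistent, and since $Y'$ was chosen independently of $T$, honesty holds for all targets simultaneously.

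The one subtlety to get right — and the step I expect to be the main obstacle — is exactly this honesty check in sub-case (ii): one has to resist the temptation to return $Y$ as a plain positive example (which would be unsound, since a target fixpoint need not contain the particular representative $Y$ that witnesses $F(X) \neq X$), and instead record it only conditionally, as the antecedent of an implication. This is precisely the role the implication component of the ICE sample space plays, and it is why positive/negative samples alone do not suffice (cf.\ the lemma preceding this proposition). The monotonicity of $F$ and the fixpoint equation $F(T) = T$ are what make the implication honest. All remaining verifications are routine bookkeeping against the definition of $\consistent_{\mathcal{R}}$, so I would state the teacher, then dispatch progress and honesty in the three cases as above.
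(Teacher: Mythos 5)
Your construction is essentially the paper's intended argument: the paper gives no written proof of this proposition, deferring to the ``analogy to Proposition~\ref{pro:ice-teacher}'', and the teacher you describe (the empty sample $\sbot$ on success, a negative example from $B$ when some bad element is included, and an implication drawn from the representative set when $F(X) \neq X$) is exactly that analogy carried out over $\mathcal{R}$. Your honesty argument in the implication case --- from $Y \subseteq T$, monotonicity of $F$, and $F(T) = T$ conclude $Y' \subseteq F(Y) \subseteq F(T) = T$ --- is the right one, and your remark that a bare positive example would be unsound here is precisely the point the paper makes (via the unnamed lemma on the insufficiency of positive/negative samples) about why the implication component is indispensable.

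There is, however, one internal inconsistency to fix. In sub-case (ii) you \emph{write} the returned sample as $(\{Y\},\emptyset,\{(Y,Y')\})$, with $Y$ in the positive component, and then your own honesty discussion correctly explains why that is unsound: a target fixpoint $T$ need not satisfy $Y \subseteq T$, so as literally stated the teacher violates honesty whenever some $T \in \target$ omits $Y$. The sample must be $(\emptyset,\emptyset,\{(Y,Y')\})$; progress is unaffected, since $Y \subseteq X$ and $Y' \not\subseteq X$ already make $X$ violate the implication. With that one-component correction the proof is complete. (A minor caveat worth making explicit, as you do: honesty of the implication rests on monotonicity of $F$, which the paper's parenthetical definition of ``increasing'' garbles but clearly intends in its abstract-interpretation setting.)
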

%\todo[inline]{The proof is just to be sure; it can be removed for the
%  submission}
%\begin{proof}
%Let $H \in \Dhat$ be a hypothesis and $X = \gamma(H)$. If $X$ does not
%meet the target specification, then one of the three conditions is
%violated. 
%
%If there is $Y \in G$ with $Y \not\subseteq X$, then the teacher can
%return the sample $(\{Y\},\emptyset,\emptyset)$. 
%
%If there is $Y \in B$ with $Y \subseteq X$, then the teacher can
%return $(\emptyset,\{Y\},\emptyset)$.
%
%If $F(X) \not\subseteq X$, then by
%Definition~\ref{def:representative-set} there are $Y,Y' \in
%\mathcal{R}$ with $Y \subseteq X$, $Y' \subseteq F(Y)$, and $Y'
%\not\subseteq X$, then the teacher can return $(\emptyset, \emptyset,
%\{(Y,Y')\})$, which is not consistent with $X$ but consistent with
%every target element by monotonicity of $F$.
%\end{proof}
%

This provides a generic way of setting up a learning scenario for
abstract interpretation, and thus provides a powerful tool for
understanding the requirements for the application and development of
machine learning algorithms for the synthesis of adequate fixpoints.

%%%%%%%%%%%%%%%%%%%%%%%%%%%%%%%%%%%%%%%%%%%%%%%%%%
\subsection{Program Synthesis}
In this section, we study several examples of learning-based program synthesis, which include synthesizing program expressions,
expressions to be plugged in program sketches, snippets of programs, etc., and show how they can be modeled as ALFs.
%Let us expand on two more learning-based synthesis algorithms, which we only briefly mentioned in Section~\ref{sec:examples}. 

\subsubsection{End-user synthesis from examples: Flashfill}
One application of synthesis is to use it to help end-users to program using examples. A prime example of this is \textsc{Flashfill} by Gulwani et al~\cite{DBLP:conf/popl/Gulwani11}, where the authors show how string manipulation macros from user-given input-output examples can be synthesized in the context of Microsoft Excel spreadsheets. Flashfill can be seen as an ALF: the concept space consists of all functions from strings to strings, the hypothesis space consists of all string manipulation macros, and the sample space consists of a sets of input-output examples for such functions. The consistency relation $\kappa$ maps each sample to all functions that agree with the sample. The role of the teacher is played by the \emph{user}: the user has some function in mind and gives new input-output examples whenever the learner returns a hypothesis that she is not satisfied with. The learning algorithm here is based on version-space algebras (which, intuitively, compactly represents \emph{all} possible macros with limited size that are consistent with the sample) and in each round proposes a simple macro from this collection.

\subsubsection{Completing sketches and the SyGuS solvers}
The sketch-based synthesis approach~\cite{armando_thesis} is another prominent synthesis application, where programmers write partial programs with holes and a system automatically synthesizes expressions or programs for these holes so that a specification (expressed using input-output pairs or logical assertions) is satisfied. The key idea here is that given a sketch with a specification, we need expressions for the holes such that \emph{for every possible input}, the specification holds. This roughly has the form
$\exists \vec{e}. \forall \vec{x} \psi (\vec{e}, \vec{x})$, where $\vec{e}$ are the expressions to synthesize and
$\vec{x}$ are the inputs to the program.

The Sketch system works by
\begin{enumerate*}
	\item unfolding loops a finite number of times, hence, bounding the length of executions, and
	\item encoding the choice of expressions $\vec{e}$ to be synthesized using bits (typically using templates and representing integers by a small number of bits). %; this results in the problem reducing to a 2SAT constraint.
\end{enumerate*}
For the synthesis step, the Sketch system implements a CEGIS (counterexample guided synthesis) technique using SAT solving, whose underlying idea is to learn the expressions from examples using only a SAT solver. The CEGIS technique works in rounds: the learner proposes hypothesis expressions and the teacher checks whether 
$\forall \vec{x} \psi (\vec{e}, \vec{x})$ holds (using SAT queries) and if not, returns a valuation for $\vec{x}$ as a counterexample. Subsequently, the learner asks, again using a SAT query, whether there exists a valuation for the bits encoding the expressions such that $\psi (\vec{e}, \vec{x})$ holds for every valuation of $\vec{x}$ returned by the teacher thus far; the resulting expressions are the hypotheses for the next round. Note that the use of samples avoids quantifier alternation both in the teacher and the learner.

The above system can be modeled as an ALF. The concept space consists of tuples of functions modeling the various expressions to synthesize, the hypothesis space is the set of expressions (or their bit encodings), the map $\concrete$ gives meaning to these expressions (or encodings), and the sample space can be seen as the set of \emph{grounded formulae} of the form $\psi(\vec{e}, \vec{v})$ where the variables $\vec{x}$ have been substituted with a concrete valuation. The relation $\consistent$ maps such a sample to the set of all expressions $\vec{f}$ such that the formulas in the sample all evaluate to true if $\vec{f}$ is substituted for $\vec{e}$. The Sketch learner can be seen as a learner in this ALF framework that uses calls to a SAT solver to find hypothesis expressions consistent with the sample.
Since expressions are encoded by a finite number of bits, the hypothesis space is finite, and the Sketch learner converges in finite time (cf.\ Section~\ref{sec:finite_hypothesis_space}).

The SyGuS format~\cite{DBLP:conf/fmcad/AlurBJMRSSSTU13} is a competition format for synthesis, and extends the Sketch-based formalism above to SMT theories, with an emphasis on syntactic restrictions for expressions. More precisely, SyGuS specifications are parameterized over a background theory $T$, and an instance is a pair $(G, \psi(\vec{f}))$ where $G$ is a grammar that imposes syntactic restrictions for  functions (or expressions) written using symbols of the background theory, and $\psi$ is a formula, again in the theory $T$, including function symbols $\vec{f}$; the functions $\vec{f}$ are
typed according to domains of $T$. The goal is to find functions $\vec{g}$ for the symbols $\vec{f}$ in the syntax $G$ such that $\psi$ holds. The competition version further restricts $\psi$ to be of the form $\forall \vec{x} \psi'(\vec{f}, \vec{x})$ where $\psi'$ is a quantifier-free formula in a decidable SMT theory---this way, given a hypothesis for the functions $\vec{f}$, the problem of checking whether the functions meet the specification is decidable.

There have been several solvers developed for SyGuS (cf.\ the first SyGuS competition~\cite{DBLP:conf/fmcad/AlurBJMRSSSTU13,DBLP:series/natosec/AlurBDF0JKMMRSSSSTU15}), and all of them are in fact learning-based (i.e., CEGIS) techniques. In particular, three solvers have been proposed: an enumerative solver, a constraint-based solver, and a stochastic solver. All these solvers can be seen as ALF instances: the concept space consists of all possible tuples of functions over the appropriate domains and the hypothesis space is the set of all functions allowed by the \emph{syntax} of the problem (with the natural $\gamma$ relation giving its semantics). All three solvers work by generating a tuple of functions such that $\forall \vec{x} \psi'(\vec{f}, \vec{x})$ holds for all valuations of $\vec{x}$ given by the teacher thus far.
The enumerative solver enumerates functions until it reaches such a function, the stochastic solver searches the space of functions randomly using a measure that depends on how many samples are satisfied till it finds one that satisfies the samples, and the constraint-based solver queries a constraint-solver for instantiations of template functions so that the specification is satisfied on the sample valuations.
Both the enumerative and the constraint-solver are Occam learners and, hence, converge in finite time.

Note that the learners \emph{know} $\psi$ in this scenario. However, we can model SyGuS as ALFs by taking the sample space to be grounded formulas $\psi'(\vec{f}, \vec{v})$ consisting of the specification with particular values $\vec{v}$ substituted for $\vec{x}$. The learners can now be seen as learning from these samples, without knowledge of $\psi$ (similar to the modeling of Sketch above).

We would like to emphasize that this embedding of SyGuS as an ALF clearly showcases the difference between different synthesis approaches (as mentioned in the introduction). For example, invariant generation can be done using learning either by means of ICE samples (see Section~\ref{subsec:program_verification}) or modeled as a SyGuS problem. However, it turns out that the sample spaces (and, hence, the learners) in the two approaches are \emph{very different}! In ICE-based learning, samples are only single configurations (labeled positive or negative) or pairs of configurations, while in a SyGuS encoding, the samples are grounded formulas that encode the entire program body, including instantiations of universally quantified variables intermediate states in the execution of the loop.
%\todo{Can we add that ICE-solvers performed better on invariant synthesis benchmarks performed better than the SyGuS solvers?}

\subsubsection{Machine-learning based approaches to synthesis}
One can implement the \emph{passive} machine-learning algorithm to synthesize hypotheses from samples, in order to build a synthesis engine (along with an appropriate teacher that can furnish such samples). Recent work by Garg et.~al.~\cite{ICEML} proposes an algorithm for synthesizing invariants in the ICE-framework using machine learning classifiers (decision trees) that can be viewed as an ALF.

\subsubsection{Synthesizing guarded affine functions}
Recent work~\cite{alchemist} explores the synthesis of \emph{guarded affine functions} from a sample space that consists of information of the form $f(\vec{s})=t$, where $\vec{s}$ and $\vec{t}$ are integers. The learner here uses a combination of computational geometry techniques and decision tree learning, and can also be modeled as an ALF. Notice that this sample space  precisely matches the sample space for deobfuscation problems
(where the teacher can return counterexamples of this form -- see Example at the end of Section~\ref{sec:alf} on page~\pageref{ex:deobfuscation} -- using the program being deobfuscated). Consequently, the
learner in Alchemist~\cite{alchemist} can be used for deobfuscating programs that compute guarded affine functions from tuples of integer inputs
to integers (like the ``multiply by 45'' example  in~\cite{DBLP:conf/icse/JhaGST10}).

\subsubsection{Other synthesis engines}
%There are several other synthesis frameworks that can be seen as ALFs.
There are several algorithms that are self-described as CEGIS frameworks, and, hence, can be modeled using ALFs. For example, synthesizing loop-free programs~\cite{DBLP:conf/pldi/GulwaniJTV11}, synthesizing synchronizing code for concurrent programs~\cite{scheduling} (in this work, the sample space consists of abstract concurrent partially-ordered traces), work on using synthesis to \emph{mine specifications}~\cite{DBLP:conf/hybrid/JinDDS13}, synthesizing bit-manipulating programs and deobfuscating programs~\cite{DBLP:conf/icse/JhaGST10} (here, the use of separate I/O-oracle can be modeled as the teacher returning the output of the program together with a counterexample input), superoptimization~\cite{DBLP:conf/asplos/Schkufza0A13}, deductive program repair~\cite{DBLP:conf/cav/KneussKK15}, synthesis of recursive functional programs over unbounded domains~\cite{DBLP:conf/oopsla/KneussKKS13}, as well as synthesis of protocols using enumerative CEGIS techniques~\cite{DBLP:conf/pldi/UdupaRDMMA13}.

\section{Variations and Limitations of the Framework} \label{app:variations}
% !TeX root = main.tex
In this section we discuss some variations and limitations of our
framework. We start by briefly discussing a variation of our framework
that omits the concept space.
%Then we consider three variations of the iterative learning
%model that cannot directly be modeled in our framework. These
%variations are 
%\begin{itemize}
%\item ``grey-box techniques'', in which the learner and the teacher
%  are not independent of each other,
%\item teachers that return counterexamples with specific properties
%  (like smallest counterexamples), and
%\item query-based learning, in which the learner can ask different
%  kind of questions to the teacher instead of simply proposing a new
%  hypothesis in each round.
%\end{itemize}
%Although these variations and extensions cannot be modeled directly in
%our frameworks, we explain how some aspects can be captured by
%appropriate choices of the sample space or the hypothesis space.

\subsection{Omitting the Concept Space}
We believe that, for a clean modeling of a synthesis problem, one
should specify the concept space $\cspace$. This makes it possible to
compare different synthesis approaches that work with different
representations of hypotheses and maybe different types of samples
over the same underlying concept space.

However, for the actual learning process, the concept space
itself is not of great importance because the learner proposes elements from
the hypothesis space, and the teacher returns an element from the
sample space. The concept space only serves as a semantic space that gives
meaning to hypotheses (via the concretization function $\concrete$),
and to the samples (via the consistency relation $\consistent$). 

Therefore, it is possible to omit the concept space from an ALF, and
to directly specify the consistency of samples with hypotheses. Such a
reduced ALF would then be of the form $\A = (\hypspace, \sspace,
\consistent)$ with a function $\consistent: \sspace \rightarrow
2^\hypspace$. In the original framework, this corresponds to the
function $\consistent_\hypspace$ defined by
$\consistent_\hypspace(S) = \concrete^{-1}\consistent(S)$.

To create ALF instances, the target specification is also directly
given as a subset of the hypothesis space $\target
\subseteq \hypspace$. All the other definitions can be adapted directly
to this framework.

\subsection{Limitations}
The ALF framework we develop in this paper is not meant to capture every 
existing method that uses learning from samples. There are several synthesis techniques 
that use grey-box techniques (a combination of black-box learning from samples and by utilizing
the specification of the target directly in some way) or use query models (where they query the
teacher for various aspects of the target set). 
%Although these variations and extensions cannot be modeled directly in
%our frameworks, we explain how some aspects can be captured by
%appropriate choices of the sample space or the hypothesis space.

For instance, there are active iterative learning scenarios
in which the learner can ask other types of questions to the teacher than just proposing
hypotheses that are then accepted or refuted by the teacher. One
prominent scenario of this kind is Angluin's active learning of DFAs
\cite{Angluin87}, where the learner can ask \emph{membership queries}
and \emph{equivalence queries}. (The equivalence queries correspond to
proposing a hypothesis, as in our framework, which is then refuted
with a counterexample if it is not correct.)
%A membership query is of
%the form ``$w \in L$?''. The teacher then replies whether the string
%$w$ is in the (unique) target language $L$. 
Such learning scenarios for synthesis are used, for example, in
\cite{AlurCMN05} for the synthesis of interface specifications for Java classes, 
and in \cite{PasareanuGBCB08} for
automatically synthesizing assumptions for assume-guarantee reasoning.
Our framework does not have a mechanism for directly modeling such
queries. 
%One can use a ``hack'' to capture such queries, by smuggling
%the query into the hypotheses and by smuggling the reply from the teacher
%into the sample. A better way would be to augment our framework with
%an  additional query space $\mathcal{Q}$  and a reply space $\mathcal{R}$,
%and modeling this query interaction. 
The ALF framework that we have presented
is intentionally a simpler framework by design that captures and cleanly models emerging synthesis 
procedures in the literature where the learner only proposes hypotheses and learns
from samples the teacher provides in terms of samples to show that the 
hypothesis is wrong. The learner in our framework, being a completely passive learner 
(as opposed to an active learner), can also be implemented by the variety of scalable passive
machine-learning algorithms in vogue~\cite{mitchell}.  A clean extension of ALFs to query
settings and grey-box settings would be an interesting future direction to pursue.

%by
%introducing new kinds of hypotheses. In the example of membership
%queries in Angluin's learning, one would add for each string a
%hypothesis (besides the DFAs). The semantics of a membership query
%would then be modeled by the consistency relation: a sample is
%consistent with a hypothesis string if it does not contain this
%string. Thus, in our framework, the teacher would be forced to return
%a sample that includes the classification of this hypothesis string. 

%However, we consider this a ``hack'' because the consistency relation
%is defined for samples and concepts. Thus, one would also need to
%extend the concept space by elements representing the strings. Moreover, one
%could not directly model that these concepts are auxiliary elements
%for modeling queries, and thus cannot be included in a target
%specification. 
%
%A cleaner solution could be obtained by introducing an additional
%query space $\mathcal{Q}$ instead of including the queries as
%auxiliary hypotheses. The semantics of queries can then be modeled by
%defining when a sample is consistent with a query, as illustrated above
%for membership queries. 

%---------- Conclusions ----------
\section{Conclusions}
\vspace{-.5\baselineskip}
We have presented an abstract learning framework for synthesis that encompasses several
existing techniques that use learning or counter-example guided inductive synthesis to 
create objects that satisfy a specification.
We were motivated by abstract interpretation~\cite{cc77} and how it
gives a general framework and notation for verification; our formalism is an attempt at such a generalization
for learning-based synthesis. The conditions we have proposed that the abstract concept spaces, hypotheses spaces,
and sample spaces need to satisfy to define a learning-based synthesis domain seem to be cogent
and general in forming a vocabulary for such approaches. We have also addressed various strategies
for convergent synthesis that generalizes and extends existing techniques (again, in a similar
vein as to how widening and narrowing in abstract interpretation give recipes for building convergent
algorithms to compute fixed-points). We believe that the notation and general theorems herein
will bring more clarity, understanding, and reuse of learners in synthesis algorithms.
%

%---------- Acknowledgements ----------
\vspace{-.25\baselineskip}
\paragraph{\normalfont\bfseries Acknowledgements:} This work was partially supported by NSF Expeditions
in Computing ExCAPE Award \#1138994.
\vspace{-.5\baselineskip}

%---------- Bibliography ----------
\enlargethispage{\baselineskip}
\bibliographystyle{splncs03}
\bibliography{alf}

\end{document}